\title[Learning the Globally Optimal Distributed LQ Regulator]{Learning the Globally Optimal Distributed LQ Regulator}
	 \author{\Name{Luca Furieri} \Email{furieril@control.ee.ethz.ch}\\  \addr  Automatic Control Laboratory, ETH Zurich, Switzerland \AND \Name{Yang Zheng} \Email{zhengy@g.harvard.edu} \\  \addr School of Engineering and Applied Sciences, Harvard University, USA   \AND \Name{Maryam Kamgarpour}  \thanks{This research was gratefully funded by the European Union ERC Starting Grant CONENE.} \Email{mkamgar@control.ee.ethz.ch}\\ \addr  Automatic Control Laboratory, ETH Zurich, Switzerland 
}
\newcommand{\tr}{{{\mathsf T}}}
\DeclareMathOperator*{\minimize}{minimize}
\newcommand{\preprintswitch}[2]{#2} 
\mathchardef\Re="023C
\mathchardef\Im="023D
\newcommand{\norm}[1]{\left\lVert#1\right\rVert}
\begin{document}
\maketitle


	%

\maketitle

\begin{abstract} 
We study model-free learning methods for the output-feedback Linear Quadratic (LQ) control problem in finite-horizon subject to subspace constraints on the control policy. Subspace constraints naturally arise in the field of distributed control and present a significant challenge in the sense that standard model-based optimization and learning leads to intractable numerical programs in general.  Building upon recent results in zeroth-order optimization, we establish model-free sample-complexity bounds for the class of distributed LQ problems where a local gradient dominance constant exists on any sublevel set of the cost function. 
We prove that a fundamental class of distributed control problems{---}commonly referred to as Quadratically Invariant (QI) problems{---}as well as others possess this property. To the best of our knowledge, our result is the first sample-complexity bound guarantee on learning globally optimal distributed output-feedback 
control policies. 

\end{abstract}

\section{Introduction}
 Recent years have witnessed significant attention and progress in  controlling unknown dynamical systems solely based on 
 system trajectory observations. This shift from classical control approaches to data-driven ones is motivated  by the ever increasing complexity of critical emerging dynamical systems, whose mathematical models 
  may be 
 unreliable or simply not available \citep{hou2013model}. When it comes to learning an optimal control policy, the available approaches can be broadly divided into two categories. The first class of methods is denoted as \emph{model-based}, where the historical system data is exploited to build an approximation of the nominal system and classical optimal robust control is then used on this system approximation. The second class of methods is denoted as \emph{model-free}, where reinforcement learning 
 is used to directly learn an optimal control policy based on the observed costs, without explicitly constructing a model for the system.

 Model-free approaches tend to require more samples to achieve a policy of equivalent accuracy~\citep{tu2018gap}, but are inherently unaffected by the potential challenges of designing an optimal controller. Indeed, in large-scale dynamical systems, the control policy is often required to be \emph{distributed}, in the sense that different controllers can only base their control policy on partial sensor measurements due to limited sensing capabilities, geographic distance or privacy concerns. Given such limitations, it has been known  that the corresponding optimization problems are NP-hard in general~\citep{papadimitriou1986intractable,blondel2000survey,Witsenhausen}. Often, one can only derive a tractable approximation using convex relaxations (e.g. \cite{fazelnia2016convex}) or restrictions (e.g. \cite{furieri2019sparsity}). The difficulties in solving model-based optimal control for large-scale systems motivate us to bypass numerical programming altogether and study the properties of 
 model-free methods for distributed control. 

 
For Linear  Quadratic (LQ) control problems in infinite-horizon without additional constraints, the optimal policy  can be derived with dynamic programming by solving a Riccati equation. 
For distributed control tasks, the optimal policy might not  be linear in general \citep{Witsenhausen} and even in those cases where an explicit solution can be computed (see e.g., \cite{lamperski2015cal} and references within), the optimal controller requires several internal states and might admit a rather complicated formulation. Furthermore, when designing a \emph{static} distributed controller in infinite-horizon,  model-free methods are unlikely to find the globally optimal controller 
due to the feasible set being disconnected in general \citep{feng2019exponential}; for this setting, convergence to \emph{local} optima was confirmed by \cite{hassan2019data}.  

Motivated as per above, in this paper we consider model-free learning of globally optimal dynamic distributed controllers. We focus on the \emph{finite-horizon} setup, where the feasible set is naturally connected because every control policy yields a finite closed-loop cost. 
Furthermore, in this setup we can 1) encode general \emph{dynamic} time-varying linear policies in a relatively simple 
way, and 2) consider time-varying system dynamics. 

\textbf{Our contributions}\hspace{0.5cm} First, we provide a general framework for model-free learning of distributed dynamic linear policies in finite-horizon with uncertain initial state, process noise and noisy output observations. 
Second, our key contribution is to establish a property of local gradient dominance for a class of distributed control problems, including 1) all Quadratically Invariant (QI) problems \citep{rotkowitz2006characterization} and 2) some non-QI problems. This local gradient dominance property is crucial for establishing model-free sample-complexity bounds using zeroth-order optimization; we base our corresponding analysis on the recent results of \cite{malik2018derivative}, while adapting and extending relevant aspects. 

\textbf{Related work}\hspace{0.5cm} Thanks to its well-understood solution structure and its properties, the LQ problem has enjoyed significant attention in the line of work on model-based learning, originating from  classical system identification (see \cite{ljung2010perspectives} 
for a nice overview). A non-asymptotic analysis was provided by \cite{fiechter1997pac} and significantly refined by \cite{dean2017sample}, and sub-linear regret results for online model-based methods were recently obtained  by \cite{dean2018regret,abbasi2011regret,abeille2018improved}. Still assuming full sensor information, 
\cite{mania2019certainty} exploited Riccati perturbation theory to analyse the 
output-feedback case and \cite{dean2019safely}  included safety constraints on states and inputs. The literature on model-free learning has recently been attracting significant research interest starting from the works of \cite{fazel2018global} and \cite{abbasi2018regret}.  Related to our work is \cite{fazel2018global}, which  showed that for the state-feedback LQ problem without an information structure, a standard 
policy-gradient method is guaranteed to converge  to the global optimum and established sample-complexity bounds that scaled with $\tilde{\mathcal{O}}(\epsilon^{-4})$, where $\epsilon$ is the  suboptimality gap. This bound was  improved to $\tilde{\mathcal{O}}(\epsilon^{-2})$ in \cite{malik2018derivative}, 
at the expense of a constant probability of success, for a  discounted LQ cost function. Furthermore,  similar convergence properties were shown for  robust control tasks without an information structure; we refer the  reader  to \cite{gravell2019learning} for the case of multiplicative noise and to \cite{zhang2019policy} for  $\mathcal{H}_\infty$ robustness guarantees.

To the best of our knowledge, 
 global convergence for distributed control problems, where a subspace constraint is imposed on the control policy, has not been studied from a model-free perspective.
A related problem has been addressed with a model-based approach in \cite{fattahi2019efficient}, where the authors extended the method of \cite{dean2017sample} by adding subspace constraints on the \emph{closed-loop responses}. 
In general, a sparse closed-loop  response does not lead to a  sparse controller implementation that is exclusively based on measuring the outputs, and vice-versa (see~\cite{zheng2019equivalence} for details on this aspect). The resulting framework is thus not directly comparable with the one considered in this paper. We also note that the work of \cite{fattahi2019efficient} restricts the analysis to state-feedback, whereas we consider noisy output-feedback.

\label{se:introduction}

\section{Background and Problem Statement}
\label{se:preliminaries}

\textbf{Notation: }We use $\mathbb{R}$ and $\mathbb{N}$ to denote the set of real numbers and integers, respectively.  We write $M=\text{blkdg}(M_1,\ldots,M_n)$ to denote a block-diagonal matrix with $M_1,\ldots,M_n$ on its diagonal block entries. 
The Kronecker product between $M \in \mathbb{R}^{m \times n}$ and $P \in \mathbb{R}^{p \times q}$ is denoted as 
$M \otimes P \in \mathbb{R}^{mp \times nq}$. 
Given $K \in \mathbb{R}^{m \times n}$,  $\text{vec}(K) \in \mathbb{R}^{mn}$ is a column vector that stacks the columns of $K$. 
We define the inverse operator $\text{vec}^{-1}: \mathbb{R}^{mn} \rightarrow \mathbb{R}^{m \times n}$ that maps a vector into a matrix (the matrix dimension shall be clear in the context).  The Euclidean norm of a vector $v \in \mathbb{R}^n$ is denoted by $\norm{v}_2^2=v^\mathsf{T}v$  and the Frobenius norm of a matrix $M \in \mathbb{R}^{m \times n}$ is denoted by $\norm{M}_{F}^2=\text{Trace}(M^\mathsf{T}M)$. For a symmetric matrix $M$, 
we write $M \succ 0$ (resp. $M \succeq 0$) if and only if it is positive definite (resp. positive semidefinite). 
We say that $x\sim \mathcal{D}$ if the random variable  $x\in \mathbb{R}^n$ is distributed according to $\mathcal{D}$.  Given a binary matrix $X \in \{0,1\}^{m \times n}$, we define the associated \emph{sparsity subspace} as
	\begin{align*}
\text{Sparse}(X)\hspace{-0.1cm}:=\{Y \in \mathbb{R}^{m \times n}\hspace{-0.1cm} \mid  Y_{i,j}\hspace{-0.1cm}=0 ~~\text{if } X_{i,j}=0, i = 1, \ldots, m, j = 1, \ldots, n \;  \}\,.
	\end{align*}
The set $\mathbb{S}_r \subseteq \mathbb{R}^d$ denotes the shell of radius $r>0$ in $\mathbb{R}^d$, that is $\mathbb{S}_r=\{z\in \mathbb{R}^d| \norm{z}_2=r\}$. A zero block of dimension $m \times n$ is denoted as $0_{m\times n}$.

\subsection{The LQ  Optimal Control Problem Subject To Subspace Constraints}\label{se:problems}

	
	
	We consider time-varying linear systems in discrete-time
	\begin{align}
	\label{eq:sys_disc}
	x_{t+1}&=A_tx_t+B_tu_t+w_t\,, \quad y_t=C_tx_t+v_t \,,
	\end{align}
	where $x_t \in \mathbb{R}^n$ is the system state at time $t$ affected by process noise $w_t \sim \mathcal{D}_w$ with $x_0=\mu_0+\delta_0$, $\delta_0 \sim \mathcal{D}_{\delta_0}$,  $y_t\in \mathbb{R}^p$ is the  observed output at time $t$ affected by  measurement noise	$v_t\sim \mathcal{D}_v$, 
	and  $u_t \in \mathbb{R}^m$ is the control input at time $t$ to be designed. We assume that the distributions $\mathcal{D}_w,\mathcal{D}_{\delta_0}$ $\mathcal{D}_v$ are bounded, have zero mean and variances of  $\Sigma_w,\Sigma_{\delta_0},\Sigma_v \succ 0$ respectively. Boundedness of the disturbances is a reasonable assumption in physical applications and it is commonly exploited to simplify the analysis of model-free methods \citep{fazel2018global,malik2018derivative}\footnote{\cite{malik2018derivative} noted that extension to sub-Gaussian disturbances is possible; 
    we leave this case to future work.
     }.  We consider the evolution of \eqref{eq:sys_disc} in finite-horizon for $t=0,\ldots N$, where $N \in \mathbb{N}$. 
 	By defining the matrices
\begin{equation*}
\mathbf{A}=\text{blkdg}(A_0,\ldots,A_{N}), \quad  \mathbf{B}\hspace{-0.1cm}=\hspace{-0.1cm}\begin{bmatrix}
 \text{blkdg}(B_0,\ldots,B_{N{-}1})\\
 0_{n \times mN}
 \end{bmatrix}, \quad \mathbf{C}=\text{blkdg}(C_0,\ldots,C_{N})
\,,
\end{equation*}
 and  the vectors $\mathbf{x}=\begin{bmatrix}x_0^\mathsf{T}&\ldots&x_N^\mathsf{T} \end{bmatrix}^\mathsf{T}$, $\mathbf{y}=\begin{bmatrix}y_0^\mathsf{T}&\ldots&y_{N}^\mathsf{T} \end{bmatrix}^\mathsf{T}$, $\mathbf{u}=\begin{bmatrix}u_0^\mathsf{T}&\ldots&u_{N-1}^\mathsf{T}\end{bmatrix}^\mathsf{T} $, $\mathbf{w}=\begin{bmatrix}x_0^\mathsf{T}&w_0^\mathsf{T}&\ldots&w_{N-1}^\mathsf{T}\end{bmatrix}^\mathsf{T}$ and $\mathbf{v}=\begin{bmatrix}v_0^\mathsf{T}&\ldots&v_{N}^\mathsf{T}\end{bmatrix}^\mathsf{T}$, and the block-down shift matrix $$\mathbf{Z}=\begin{bmatrix}
0_{1 \times N}&0\\
I_{N}&0_{N \times 1} 
\end{bmatrix}\otimes I_n\,,$$ 
we can  write the system~\eqref{eq:sys_disc} compactly as $\mathbf{x}= \mathbf{Z}\mathbf{A}\mathbf{x}+\mathbf{Z}\mathbf{B}\mathbf{u}+\mathbf{w}$,  $\mathbf{y}=\mathbf{Cx}+\mathbf{v},$ leading to 
\begin{align}
\label{eq:system_compact}
&\mathbf{x}=\mathbf{P}_{11}\mathbf{w}+\mathbf{P}_{12}\mathbf{u}\,, \quad \mathbf{y}=\mathbf{Cx}+\mathbf{v},
\end{align}
where  $\mathbf{P}_{11}=(I-\mathbf{Z}\mathbf{A})^{-1}$ and $\mathbf{P}_{12}=(I-\mathbf{Z}\mathbf{A})^{-1}\mathbf{Z}\mathbf{B}$. In this paper, we consider linear output-feedback policies 
$u_t = K_{t,0}y_0 + K_{t,1}y_1, + \ldots, K_{t,t}y_t, t = 0, 1, \ldots, N-1$. More compactly
\begin{equation}
\label{eq:control_input_def}
\mathbf{u}=\mathbf{Ky}, \quad \mathbf{K} \in \mathcal{K}\,,
\end{equation}
where $\mathcal{K}$ is a subspace in $\mathbb{R}^{mN \times p(N+1)}$ that $1)$ ensures causality of $\mathbf{K}$ by setting to $0$ those entries that correspond to future outputs, 
$2)$ 
can enforce a time-varying spatio-temporal information structure for distributed control. The presence of these information constraints presents a significant challenge for optimal distributed control; we refer to \cite{furieri2019unified} for details. 
%


The distributed Linear Quadratic (LQ) optimal control problem in finite-horizon is:
 \begin{equation*}
\text{\textbf{Problem $LQ_\mathcal{K}$}:}\qquad \min_{\mathbf{K} \in \mathcal{K}} \quad J(\mathbf{K})\,,
 \end{equation*}
where the cost $J(\mathbf{K})$ is defined as
\begin{equation}
\label{eq:cost}
J(\mathbf{K})\hspace{-0.06cm}:=\hspace{-0.06cm}\mathbb{E}_{\mathbf{w},\mathbf{v}}\hspace{-0.06cm}\left[\sum_{t=0}^{N-1}\hspace{-0.1cm}\left(y_t^\mathsf{T}M_ty_t\hspace{-0.06cm}+\hspace{-0.06cm}u_t^\mathsf{T}R_tu_t\right)+y_N^\mathsf{T}M_N y_N\hspace{-0.06cm}\right]\hspace{-0.15cm}\,,
\end{equation}	
and  $M_t \succeq 0$ and $R_t \succ 0$ for every $t$. We denote the optimal value of problem $LQ_\mathcal{K}$ as $J^\star$. 
By rearranging~\eqref{eq:system_compact}-\eqref{eq:control_input_def}, it can be observed that $J(\mathbf{K})$ is in general a non-convex multivariate polynomial in the entries of $\mathbf{K}$; see  \preprintswitch{Appendix~A of our Arxiv report  \cite{furieri2019learning}}{Appendix~\ref{ap:propertiesJ}} for an explicit expression of $J(\mathbf{K})$ and some useful properties. Note that $LQ_\mathcal{K}$ is a constrained problem over the subspace $\mathcal{K}$; it is convenient to observe that $LQ_\mathcal{K}$ is actually equivalent to an unconstrained problem.
\begin{lemma}
Let $d\in \mathbb{N}$ be the dimension of $\mathcal{K}$, and the columns of $P \in \mathbb{R}^{mpN(N+1)\times d}$ be a basis of the subspace $\{\text{vec}(\mathbf{K})|~\forall \mathbf{K} \in \mathcal{K}\}$. Define the function $f:\mathbb{R}^d \rightarrow \mathbb{R}$ as $f(z):=J(\text{vec}^{-1}(Pz))$. Then, $LQ_\mathcal{K}$ is equivalent to  the \emph{unconstrained} problem\footnote{Throughout this paper, $J(\mathbf{K})$ is reserved for the LQ cost function in~\eqref{eq:cost} and $f(z)$ is reserved for the equivalent cost function $f(z):=J(\text{vec}^{-1}(Pz))$.}
\begin{equation}
\label{eq:problem_LQK_vectorized}
\min_{z \in \mathbb{R}^d} f(z)\,.
\end{equation}
\end{lemma}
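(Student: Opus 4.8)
The plan is to observe that the matrix $P$ merely provides an explicit linear parametrization of the subspace $\mathcal{K}$, so that the constrained problem $LQ_\mathcal{K}$ and the unconstrained problem~\eqref{eq:problem_LQK_vectorized} share the same set of attainable cost values and have corresponding optimizers. First I would recall that $\text{vec}:\mathbb{R}^{mN \times p(N+1)} \rightarrow \mathbb{R}^{mpN(N+1)}$ is a linear bijection, so that $\mathcal{V}:=\{\text{vec}(\mathbf{K}) \mid \mathbf{K} \in \mathcal{K}\}$ is a subspace of $\mathbb{R}^{mpN(N+1)}$ of the same dimension $d$ as $\mathcal{K}$, and $\text{vec}^{-1}$ restricts to a bijection from $\mathcal{V}$ onto $\mathcal{K}$.

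Next, since by assumption the columns of $P$ form a basis of $\mathcal{V}$, the linear map $z \mapsto Pz$ is injective (the columns of $P$ are linearly independent) and has range exactly $\mathcal{V}$ (the columns span $\mathcal{V}$). Composing with $\text{vec}^{-1}$, the map $\Phi:\mathbb{R}^d \rightarrow \mathcal{K}$ defined by $\Phi(z):=\text{vec}^{-1}(Pz)$ is therefore a bijection. Because $f(z) = J(\Phi(z))$ by definition and $\Phi$ is onto $\mathcal{K}$, we get the identity of value sets $\{f(z)\mid z \in \mathbb{R}^d\} = \{J(\mathbf{K})\mid \mathbf{K} \in \mathcal{K}\}$, hence $\inf_{z\in\mathbb{R}^d} f(z) = \inf_{\mathbf{K}\in\mathcal{K}} J(\mathbf{K}) = J^\star$. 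Moreover, invertibility of $\Phi$ gives the one-to-one correspondence of optimizers: $z^\star$ minimizes $f$ if and only if $\mathbf{K}^\star = \Phi(z^\star)$ minimizes $J$ over $\mathcal{K}$, and conversely $z^\star = \Phi^{-1}(\mathbf{K}^\star)$ for any minimizer $\mathbf{K}^\star$. This is exactly the claimed equivalence; the use of $\min$ is justified once attainment of the infimum is noted, which holds here and can be checked separately.

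I do not anticipate a genuine obstacle: the result is a direct consequence of $P$ being a basis matrix, and the proof is essentially bookkeeping with the linear maps $\text{vec}$, $\text{vec}^{-1}$, and $z \mapsto Pz$. The one point that requires mild care is to invoke \emph{both} linear independence of the columns of $P$ (for injectivity of the parametrization, hence the optimizer correspondence) and the spanning property (for surjectivity onto $\mathcal{K}$, hence the equality of optimal values); neither direction of the equivalence goes through with only one of these two properties.
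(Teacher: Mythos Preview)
Your proof is correct and follows essentially the same approach as the paper: both use that the columns of $P$ form a basis of $\mathcal{V}=\{\text{vec}(\mathbf{K})\mid \mathbf{K}\in\mathcal{K}\}$ to establish that $z\mapsto \text{vec}^{-1}(Pz)$ parametrizes $\mathcal{K}$, yielding the equivalence. Your version is slightly more explicit in separating out injectivity (for the optimizer correspondence) from surjectivity (for the equality of optimal values), whereas the paper's proof simply notes the two inclusions and concludes.
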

\begin{proof}
 Since the columns of $P$ are a basis of $\mathcal{K}$, we have 1) $\forall \mathbf{K} \in \mathcal{K},~ \exists z \in \mathbb{R}^d$ such that $\text{vec}(\mathbf{K}) = Pz$ and 2) $\forall z \in \mathbb{R}^d, ~\text{vec}^{-1}(Pz) \in \mathcal{K}$. Hence, \eqref{eq:problem_LQK_vectorized} is equivalent to $LQ_\mathcal{K}$.
\end{proof}

%

The function $f(z)$ is generally a non-convex multivariate polynomial in $z \in \mathbb{R}^d$ which may possess multiple local-minima, thus preventing global convergence of model-free 
algorithms. Furthermore, as opposed to the standard LQ problem without subspace constraints, one cannot in general exploit a tractable reformulation or Riccati-based solutions and apply model-based learning as per e.g. \cite{dean2017sample,mania2019certainty}. Fortunately,  $f(z)$  admits a unique global minimum if it is \emph{gradient dominated} 
\emph{i.e.}, $\mu(f(z)-J^\star)\leq  \norm{\nabla f(z)}_2^2\,, ~ \forall z \in \mathbb{R}^d$ for some $\mu>0$ \citep{karimi2016linear}. 
Gradient dominance has been proved for the standard LQ problem in infinite horizon without subspace constraints \citep{fazel2018global,gravell2019learning}.
Inspired by these recent results, we explore conditions under which $LQ_\mathcal{K}$ admits a gradient dominance constant, to be exploited for model-free learning of globally optimal distributed controllers. 

\section{Local Gradient Dominance for QI Problems and Beyond}
\label{sub:QI}

It is well-known since the work of \cite{rotkowitz2006characterization} that problem $LQ_\mathcal{K}$ can be equivalently transformed into a  strongly convex program if and only if QI holds, that is
\begin{equation}
\label{eq:QI}
\mathbf{KCP}_{12}\mathbf{K} \in \mathcal{K},\quad \forall \mathbf{K} \in \mathcal{K}\,. 
\end{equation}
We refer to \preprintswitch{Appendix~B of our Arxiv report \cite{furieri2019learning}}{Appendix~\ref{ap:PL}} for a detailed discussion of the QI property. In model-free learning, one directly investigates whether $LQ_\mathcal{K}$ possesses favourable properties for convergence (such as gradient dominance) rather than convexifying through a system-dependent change of variables.  Our main contribution is to prove a \emph{local} gradient dominance property for 1) the class of all QI instances of $LQ_\mathcal{K}$ 2) other non-QI instances of $LQ_\mathcal{K}$. 

\begin{theorem}
\label{th:PL2}
Let $\mathcal{K}$ be QI with respect to $\mathbf{CP}_{12}$, i.e.,~\eqref{eq:QI} holds. For any $\delta>0$ and initial value $z_0 \in \mathbb{R}^d$, define the sublevel set $\mathcal{G}_{10\delta^{-1}}=\{z \in \mathbb{R}^d \mid~f(z)-J^\star\leq 10\delta^{-1} \Delta_0\}$, where $\Delta_0 := f(z_0)-J^\star$ is the initial optimality gap.  Then, the following statements hold.
\begin{enumerate}
    \item $\mathcal{G}_{10\delta^{-1}}$ is compact.
    
\item $f(z)$ has a unique stationary point.
   
    \item $f(z)$ admits a local gradient dominance constant $\mu_\delta > 0$ over 
$\mathcal{G}_{10\delta^{-1}}$, that is
\begin{equation}
\label{eq:LGD}
    \mu_\delta(f(z)-J^\star)\leq  \norm{\nabla f(z)}_2^2\,, ~ \forall z \in \mathcal{G}_{10\delta^{-1}}\,.
\end{equation}

\end{enumerate}

\end{theorem}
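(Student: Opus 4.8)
The plan is to exploit the fact recalled above that quadratic invariance recasts $LQ_{\mathcal{K}}$ as a strongly convex program via a change of variables, and to transport strong convexity (hence gradient dominance) back to $f$ through the associated coordinate change. For causal $\mathbf{K}$ define $\mathbf{Q} := \mathbf{K}(I - \mathbf{C}\mathbf{P}_{12}\mathbf{K})^{-1}$, the finite-horizon analogue of the classical parametrization underlying the convexification recalled above. Since $\mathbf{C}\mathbf{P}_{12}$ is strictly block--lower--triangular, $\mathbf{C}\mathbf{P}_{12}\mathbf{K}$ is nilpotent, so this map and its inverse $\mathbf{K} = \mathbf{Q}(I + \mathbf{C}\mathbf{P}_{12}\mathbf{Q})^{-1}$ are polynomial (finite Neumann series), and \eqref{eq:QI} gives $\mathbf{K} \in \mathcal{K} \iff \mathbf{Q} \in \mathcal{K}$. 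Substituting $\mathbf{u} = \mathbf{Q}\boldsymbol{\xi}$ and $\mathbf{y} = (I + \mathbf{C}\mathbf{P}_{12}\mathbf{Q})\boldsymbol{\xi}$, with $\boldsymbol{\xi} := \mathbf{C}\mathbf{P}_{11}\mathbf{w} + \mathbf{v}$, into \eqref{eq:cost} shows $J(\mathbf{K}) = \tilde{J}(\mathbf{Q})$, where $\tilde{J}$ is a quadratic whose Hessian in $\text{vec}(\mathbf{Q})$ dominates $2\lambda_{\min}(\mathbf{R})\lambda_{\min}(\Sigma_\xi) I \succ 0$, with $\mathbf{R} = \text{blkdg}(R_0, \dots, R_{N-1}) \succ 0$ and $\Sigma_\xi := \mathbb{E}[\boldsymbol{\xi}\boldsymbol{\xi}^\mathsf{T}] \succ 0$ under the standing assumptions; hence $\tilde{J}$ is strongly convex, and so is its restriction to $\mathcal{K}$. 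Writing $\mathbf{Q} = \text{vec}^{-1}(Pq)$ and $g(q) := \tilde{J}(\text{vec}^{-1}(Pq))$, the map $\psi : \mathbb{R}^d \to \mathbb{R}^d$ defined by $Pq = \text{vec}(\mathbf{Q}(\text{vec}^{-1}(Pz)))$ is a polynomial bijection with polynomial inverse, $g$ is strongly convex with some modulus $\alpha > 0$, $f = g \circ \psi$, and $J^\star = \min_q g(q) = g(q^\star)$ with $q^\star$ unique.

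Statement (1) follows since strong convexity yields $g(q) \ge g^\star + \tfrac{\alpha}{2}\norm{q - q^\star}_2^2$, so on $\mathcal{G}_{10\delta^{-1}}$ one has $\norm{\psi(z) - q^\star}_2 \le \sqrt{20\delta^{-1}\Delta_0/\alpha}$; since $\psi^{-1}$ is a polynomial map, it is bounded on this ball, which bounds $\norm{z}_2$, and closedness is clear from continuity of $f$. For statement (2), differentiating the coordinate change and applying the push--through identity gives $d\mathbf{Q} = (I - \mathbf{K}\mathbf{C}\mathbf{P}_{12})^{-1}\,d\mathbf{K}\,(I - \mathbf{C}\mathbf{P}_{12}\mathbf{K})^{-1}$, a bijective linear map; restricted to directions $d\mathbf{K} \in \mathcal{K}$ it still lands in $\mathcal{K}$ (the coordinate change maps $\mathcal{K}$ into $\mathcal{K}$ by \eqref{eq:QI}) and is still injective, hence bijective on the $d$-dimensional space $\mathcal{K}$; therefore $D\psi(z)$ is nonsingular for every $z$. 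The chain rule $\nabla f(z) = D\psi(z)^\mathsf{T}\nabla g(\psi(z))$ then gives $\nabla f(z) = 0 \iff \nabla g(\psi(z)) = 0 \iff \psi(z) = q^\star$, so $f$ has the unique stationary point $z^\star = \psi^{-1}(q^\star)$.

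For statement (3), strong convexity of $g$ implies the gradient-dominance bound $2\alpha(g(q) - g^\star) \le \norm{\nabla g(q)}_2^2$ for all $q$. Taking $q = \psi(z)$ and $\nabla g(\psi(z)) = D\psi(z)^{-\mathsf{T}}\nabla f(z)$ yields $2\alpha(f(z) - J^\star) \le \sigma_{\min}(D\psi(z))^{-2}\norm{\nabla f(z)}_2^2$. By (1) the set $\mathcal{G}_{10\delta^{-1}}$ is compact, and by (2) the continuous function $z \mapsto \sigma_{\min}(D\psi(z))$ is strictly positive on it, hence attains a positive minimum $\sigma_\delta$ there; setting $\mu_\delta := 2\alpha\sigma_\delta^2 > 0$ gives \eqref{eq:LGD}.

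The main obstacle is establishing that the Jacobian $D\psi(z)$ is nonsingular for \emph{all} $z$: this is exactly where quadratic invariance enters, through the push--through identity together with the invariance of $\mathcal{K}$ under the coordinate change. A secondary point is transferring coercivity from the $q$-coordinate system back to $z$, which relies on $\psi^{-1}$ being polynomial; a global bi-Lipschitz bound is unavailable, which is also why the gradient-dominance constant is only local --- $\sigma_{\min}(D\psi(z))$ can degenerate as $\norm{z}_2 \to \infty$.
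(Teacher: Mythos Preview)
Your proposal is correct and follows essentially the same route as the paper: exploit QI to obtain a strongly convex reformulation $g$ in the $\mathbf{Q}$-coordinates, note that the change of variables $z\leftrightarrow q$ is a polynomial bijection with polynomial inverse (hence a global diffeomorphism), transfer compactness of sublevel sets and uniqueness of the stationary point through this diffeomorphism, and then transport the global PL inequality for $g$ to a local one for $f$ via the chain rule together with a Jacobian bound on the compact sublevel set. The only cosmetic differences are that the paper bounds $\|\mathbf{J}_h(h^{-1}(z))\|_F^2$ from above (Jacobian of $q\mapsto z$) whereas you bound $\sigma_{\min}(D\psi(z))$ from below (Jacobian of $z\mapsto q$), which is equivalent since the two Jacobians are mutual inverses, and that you make the nonsingularity of the Jacobian explicit via the push-through identity and QI where the paper simply invokes invertibility of the polynomial map $h$.
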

The proof of Theorem~\ref{th:PL2} is reported in \preprintswitch{Appendix~B of our Arxiv report \cite{furieri2019learning}}{Appendix~\ref{ap:PL}}. In other words, QI guarantees existence of a gradient dominance constant $\mu_\delta$ which is ``global'' on $\mathcal{G}_{10\delta^{-1}}$, for any $\delta>0$. By inspection of (\ref{eq:LGD}), for every $\delta>0$, the only  stationary point contained in $\mathcal{G}_{10\delta^{-1}}$ is the global optimum, since whenever $\nabla f(z) = 0$, we have $f(z) = J^\star$.

We remark that the property \eqref{eq:LGD} is weaker than the more common global gradient dominance; we present a simple instance of $LQ_\mathcal{K}$ satisfying \eqref{eq:LGD} in \preprintswitch{Appendix~B of our Arxiv report \cite{furieri2019learning}}{Appendix~\ref{ap:PL}}. We will show in Section~\ref{sec:learning} that \eqref{eq:LGD} is sufficient for global convergence of model-free algorithms. Furthermore, diverse classes of non-QI  $LQ_\mathcal{K}$ that yet are convex in $\mathbf{K}$ have been found in \cite{lessard2010internal,shin2011decentralized}, and more recently in \cite{furieri2019first}. For completeness, we report an explicit example in \preprintswitch{Appendix~B of our Arxiv report}{Appendix~\ref{ap:PL}}. Finally, notice that  $\mathcal{K}$ typically enforces a sparsity pattern for $\mathbf{K}$. Therefore, the QI property \eqref{eq:QI} can be checked without knowing the specific system dynamics, but only using the knowledge of the sparsity pattern of $\mathbf{CP}_{12}$ (see \cite{furieri2019unified} for example). This is a realistic assumption for dynamical systems that are distributed by nature.

\section{Learning the Globally Optimal Constrained Control Policy}
\label{sec:learning}

Here, we derive sample-complexity bounds for model-free learning of globally optimal distributed controllers for the problems identified in Section~\ref{sub:QI}. Our analysis technique is founded on recent zeroth-order optimization results \citep{malik2018derivative,fazel2018global}; we extend the derived bounds on the gradient estimates to include noise on the initial state, process noise and measurement noise. Furthermore, our analysis hinges on the observation that local gradient dominance is sufficient to guarantee the sample-complexity bounds in our framework, whereas \cite{malik2018derivative,fazel2018global} used a global one.
 The zeroth-order optimization literature is quite rich, see for instance the works of  \cite{balasubramanian2019zeroth,nesterov2017random,ghadimi2013stochastic} and references therein. The key idea of such algorithms is to sample noisy function values of $f$ generated by an \emph{oracle}, based on which an approximated gradient $\widehat{\nabla f}$ is estimated and standard gradient descent is applied to optimize over $z$.  While \cite{malik2018derivative} proposed an analysis for two-point evaluation oracles that allow for tighter sample-complexity bounds, we notice that in many control applications one cannot control or predict the noise affecting each separate measurement. 
We 
 will thus focus  on the 
  one-point evaluation oracle setup according to the Algorithm~\ref{algo} below.
\begin{algorithm}[h]
	\caption{ Model-free learning of distributed controllers}
	\label{algo}
	\begin{algorithmic}[1]
		\STATE Input: $z_0$, number of iterations $T$, stepsize $\eta>0$ and smoothing radius $r>0$.
		\FOR{$i = 0, \ldots, T-1$}
		\STATE Sample $u\sim \text{Unif}(\mathbb{S}_r)$, let nature ``choose'' disturbances $\delta_0\sim \mathcal{D}_{\delta_0}$, $w_t\sim \mathcal{D}_w$ for all $t=0,\ldots,N-1$, $v_t\sim \mathcal{D}_{v}$ for all $t=0,\ldots,N$.
		\STATE Apply  $\hat{\mathbf{u}}=\text{vec}^{-1}[P(z_i+u)]\hat{\mathbf{y}}$ iteratively using (\ref{eq:sys_disc}) and  store the resulting trajectories $\hat{\mathbf{y}},\hat{\mathbf{u}}.$
		\STATE Compute $\hat{f}=\hat{\mathbf{y}}^\mathsf{T}\text{blkdg}(M_0,\ldots,M_N)\hat{\mathbf{y}}+\hat{\mathbf{u}}^\mathsf{T}\text{blkdg}(R_0,\ldots R_{N-1})\hat{\mathbf{u}}$ and  $\widehat{\nabla f}=\hat{f}\frac{d}{r^2}u$.
		\STATE $z_{i+1}\leftarrow z_i-\eta \widehat{\nabla f}$.
		\ENDFOR
		\RETURN $\mathbf{K}_T= \text{vec}^{-1}(Pz_T)$.
	\end{algorithmic}
\end{algorithm}

In Algorithm~\ref{algo}, the observed cost $\hat{f}$ can be regarded as the output of a one-point evaluation oracle. Indeed, we have $\mathbb{E}_{\mathbf{w},\mathbf{v}}[\hat{f}]=f(z_i+u)$ by definition, and each observation $\hat{f}$ is affected by a different noise sequence. The value $\widehat{\nabla f}$ can be interpreted  as a noisy estimate\footnote{Technically, 
$\mathbb{E}[\widehat{\nabla f}] = \nabla f_r(z_i)$,
where $f_r(z_i)=\mathbb{E}_{u}[f(z_i+u)]$, with $u$ is  taken uniformly at random over $\mathbb{S}_r$; see, e.g., \cite[Lemma~6]{malik2018derivative} for details.} of the  gradient $\nabla f(z_i)$.  We now turn to the convergence analysis.

\subsection{Sample-complexity bounds}
Our sample-complexity analysis holds under two main assumptions on the function $f$.

\textbf{Assumption 1.~} For any $\delta>0$ and initial value $z_0 \in \mathbb{R}^d$, the sublevel set $\mathcal{G}_{10\delta^{-1}}$ of $f$ is compact.

\textbf{Assumption 2.~} For any $\delta>0$, the function $f$ admits a local gradient dominance constant $\mu_\delta$ over 
$\mathcal{G}_{10\delta^{-1}}$ as per (\ref{eq:LGD}).

In Section~\ref{sub:QI} we have  provided our main result about verifying that both assumptions hold for 1) all QI control problems and 2) some instances of non-QI problems, therefore establishing a novel fundamental connection between distributed control and zeroth-order optimization. As is common in zeroth-order analysis, we also verify Lipschitzness  and smoothness of $f$ on its sublevel sets.

\begin{lemma}
\label{le:Lipschitzianity}
 Let $\delta>0$ and Assumption~1 hold. Then, there  exist $\rho_0>0$, and $L_\delta,M_\delta>0$,  such that
\begin{equation}
\label{eq:Lipschitzianity_explicit}
\begin{aligned}
&|f(z')-f(z)|\leq L_\delta \norm{z'-z}_2\,,~~\norm{\nabla f(z')-\nabla f(z)}_2\leq M_\delta \norm{z'-z}_2\,,
 \end{aligned}
\end{equation}
for every $z', z \in \mathcal{G}_{10\delta^{-1}}$
such that $\norm{z'-z}_2\leq \rho_0$.
\end{lemma}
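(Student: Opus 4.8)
The plan is to exploit the fact that $f(z) = J(\text{vec}^{-1}(Pz))$ is a fixed multivariate polynomial in the entries of $z$, so both $f$ and $\nabla f$ are polynomial (hence $C^\infty$) maps, and the only issue is to turn the \emph{local} (in $z$) Lipschitz/smoothness bounds that any $C^1$/$C^2$ function enjoys into \emph{uniform} bounds on the sublevel set $\mathcal{G}_{10\delta^{-1}}$. The key enabling fact is Assumption~1 (equivalently statement~1 of Theorem~\ref{th:PL2}): $\mathcal{G}_{10\delta^{-1}}$ is compact. So first I would note that $J(\mathbf{K})$, written out via~\eqref{eq:system_compact}--\eqref{eq:control_input_def}, is a polynomial in $\mathbf{K}$, so $f$, $\nabla f$ and $\nabla^2 f$ are all continuous (indeed polynomial) on all of $\mathbb{R}^d$; the explicit expression and these properties are recorded in the referenced Appendix~A.

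Next I would not work on $\mathcal{G}_{10\delta^{-1}}$ itself but on a slightly enlarged compact set: let $\rho_0 := 1$ (any fixed positive constant works) and define $\overline{\mathcal{G}}_\delta := \{ z \in \mathbb{R}^d \mid \operatorname{dist}(z,\mathcal{G}_{10\delta^{-1}}) \le \rho_0\}$, the closed $\rho_0$-neighbourhood of $\mathcal{G}_{10\delta^{-1}}$. Since $\mathcal{G}_{10\delta^{-1}}$ is compact, $\overline{\mathcal{G}}_\delta$ is compact as well (it is closed and bounded). On this compact set the continuous functions $z \mapsto \norm{\nabla f(z)}_2$ and $z \mapsto \norm{\nabla^2 f(z)}_2$ attain finite maxima; call them $L_\delta := \max_{z \in \overline{\mathcal{G}}_\delta}\norm{\nabla f(z)}_2$ and $M_\delta := \max_{z \in \overline{\mathcal{G}}_\delta}\norm{\nabla^2 f(z)}_2$. (Both are strictly positive unless $f$ is constant, in which case the bounds hold trivially with any positive constants.)

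Then I would close the argument by a standard mean-value / fundamental-theorem-of-calculus estimate along straight-line segments. Take any $z, z' \in \mathcal{G}_{10\delta^{-1}}$ with $\norm{z'-z}_2 \le \rho_0$; the segment $\gamma(s) = z + s(z'-z)$, $s \in [0,1]$, stays within $\overline{\mathcal{G}}_\delta$ because every point of it is within distance $\norm{z'-z}_2 \le \rho_0$ of $z \in \mathcal{G}_{10\delta^{-1}}$. Writing $f(z') - f(z) = \int_0^1 \nabla f(\gamma(s))^{\mathsf T}(z'-z)\,ds$ and bounding the integrand by Cauchy--Schwarz gives $|f(z')-f(z)| \le L_\delta \norm{z'-z}_2$; writing $\nabla f(z') - \nabla f(z) = \int_0^1 \nabla^2 f(\gamma(s))(z'-z)\,ds$ and bounding by the operator norm gives $\norm{\nabla f(z')-\nabla f(z)}_2 \le M_\delta \norm{z'-z}_2$. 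This yields~\eqref{eq:Lipschitzianity_explicit}.

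The only real obstacle is a conceptual one, not a computational one: without the restriction $\norm{z'-z}_2 \le \rho_0$ one could not guarantee that the connecting segment stays in a fixed compact set (two points of $\mathcal{G}_{10\delta^{-1}}$ need not have their segment inside $\mathcal{G}_{10\delta^{-1}}$, since sublevel sets of nonconvex polynomials need not be convex), so the enlargement to $\overline{\mathcal{G}}_\delta$ together with the local radius $\rho_0$ is exactly what makes the maxima finite and the segment argument valid. If one instead wanted genuinely global Lipschitz constants, no such bound would exist, since a nonconstant polynomial of degree $\ge 2$ is not globally Lipschitz — which is precisely why the statement is phrased locally. I would also remark that the explicit polynomial structure of $J$ (from Appendix~A) could be used to write down semi-explicit formulas for $L_\delta, M_\delta$ in terms of $\mathbf{P}_{11}, \mathbf{P}_{12}, \mathbf{C}$, the cost weights $M_t, R_t$, a bound on $\norm{z}_2$ over $\mathcal{G}_{10\delta^{-1}}$, and the disturbance covariances, but for the purposes of this lemma the compactness argument suffices.
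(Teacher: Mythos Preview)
Your proof is correct and follows essentially the same approach as the paper: both rely on the fact that $f$ is a multivariate polynomial and that $\mathcal{G}_{10\delta^{-1}}$ is compact, so the gradient (and Hessian) are bounded there. The paper's proof is a two-line sketch (``$\nabla f$ is a vector of polynomials and polynomials are bounded on any compact set''), whereas you supply the details it omits, in particular the enlargement to the $\rho_0$-neighbourhood $\overline{\mathcal{G}}_\delta$ so that the mean-value segment argument goes through even though $\mathcal{G}_{10\delta^{-1}}$ need not be convex.
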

\begin{proof}
We know that $f$ is a multivariate polynomial. Since $\mathcal{G}_{10\delta^{-1}}$ is compact, it suffices to note that 
$\nabla f$ is a vector of polynomials and that polynomials are bounded on any compact set. 
\end{proof}
We are now ready to present the sample-complexity result. Its proof is reported in \preprintswitch{Appendix~C of our Arxiv report \cite{furieri2019learning}}{Appendix~\ref{app:theorem}}.
\begin{theorem}
\label{th:convergence_analysis}
Let Assumptions $1$ and $2$ hold, and consider Algorithm~\ref{algo}. Let $\eta>0$ and $r>0$ be selected according to
\begin{align*}
&\eta \leq \min \left\{ \frac{\epsilon \mu_\delta \delta^3 r^2}{16000 M_\delta d^2D^2f(z_0)^2 },\frac{1}{2M_\delta},\frac{\rho_0 r \delta}{20dDf(z_0)}  \right\},\\
&r \leq  \hspace{-0.1cm}\min\hspace{-0.1cm} \left\{\frac{\min\left(\hspace{-0.1cm}\frac{1}{2M_\delta},\frac{\rho_0}{L_\delta}\right) \mu_\delta}{2M_\delta} \sqrt{\frac{\delta \epsilon}{40}},\frac{1}{2 M_\delta}\sqrt{\frac{\epsilon \mu_\delta \delta}{5}}, \rho_0,\frac{10 \delta^{-1}f(z_0)}{L_\delta}\right\}\,,
\end{align*}
where $\rho_0>0$, $\mu_\delta$ is the local gradient dominance constant of $f(z)$ 
associated with $\mathcal{G}_{10\delta^{-1}}$, $L_\delta,$ $M_\delta$ are the local Lipschitzness and smoothness constants described in Lemma~\ref{le:Lipschitzianity}, and
 $D=\max \left( \frac{W^2}{\lambda_\mathbf{w}},\frac{V^2}{\lambda_\mathbf{v}}\right)$,  with $W$ the value such that $\norm{\mathbf{w}}_2\leq W$ for all $\delta_0\sim\mathcal{D}_{\delta_0},~w_0,\ldots,w_{N-1}\sim \mathcal{D}_{w}$, $V$ the value such that   $\norm{\mathbf{v}}_2\leq V$ for all $v_0,\ldots,v_{N}\sim \mathcal{D}_{v}$, and  $\lambda_\mathbf{w}$ and $\lambda_\mathbf{v}$ are the minimum eigenvalues of $\mathbb{E}[\mathbf{w}\mathbf{w}^\mathsf{T}]$ and $\mathbb{E}[\mathbf{v}\mathbf{v}^\mathsf{T}]$ respectively. Then for any  $\epsilon>0$ and $0<\delta<1$ such that $\epsilon \log (\frac{4\Delta_0}{\delta\epsilon})\leq \frac{16}{\delta}\Delta_0$, running Algorithm~\ref{algo} with $T= \frac{4}{\eta \mu_\delta}\log (\frac{4\Delta_0}{\delta\epsilon})$ iterations yields a \emph{distributed} control policy $\mathbf{K}_T \in \mathcal{K}$ such that
\begin{equation*}
J(\mathbf{K}_T)-J^\star\leq \epsilon\,,
\end{equation*}
with probability greater than $1-\delta$. 
\end{theorem}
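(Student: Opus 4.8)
The plan is to follow the zeroth-order optimization template of \cite{malik2018derivative,fazel2018global}, with the essential modification that the global gradient dominance they exploit is here available only \emph{locally}, on $\mathcal{G}_{10\delta^{-1}}$. As a consequence the entire analysis must be conditioned on the iterates never leaving this enlarged sublevel set, and the inflation factor $10\delta^{-1}$ together with the target confidence $1-\delta$ are exactly what make this confinement argument close. I would begin by recording the two statistical properties of the one-point estimator $\widehat{\nabla f}=\hat f\frac{d}{r^2}u$. Its mean is already supplied in the footnote, $\mathbb{E}[\widehat{\nabla f}\mid z_i]=\nabla f_r(z_i)$ with $f_r(z):=\mathbb{E}_{u\sim\text{Unif}(\mathbb{S}_r)}[f(z+u)]$, so each step is a stochastic gradient step on the smoothed surrogate $f_r$.

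For the second moment, since $\norm{u}_2=r$ we have $\norm{\widehat{\nabla f}}_2^2=\frac{d^2}{r^2}\hat f^2$, so it suffices to bound $\mathbb{E}[\hat f^2\mid z_i]$. This is where the noise structure enters: boundedness of the disturbances ($\norm{\mathbf{w}}_2\le W$, $\norm{\mathbf{v}}_2\le V$) upper-bounds any realized cost $\hat f$ by a constant multiple of the expected cost $f$, while the positive-definite excitation provided by the noise (captured by $\lambda_\mathbf{w},\lambda_\mathbf{v}$) lower-bounds $f$; together these give $\mathbb{E}[\hat f^2\mid z_i]\lesssim D^2 f(z_i)^2$, which on $\mathcal{G}_{10\delta^{-1}}$ is at most $O(\delta^{-2}D^2 f(z_0)^2)$. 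This is the origin of the $d^2D^2f(z_0)^2/r^2$ factor appearing in the step-size prescription.

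Next I would convert the surrogate step into genuine descent on $f$. Lemma~\ref{le:Lipschitzianity} supplies local Lipschitzness and smoothness with constants $L_\delta,M_\delta$ on the ball of radius $\rho_0$; the upper bounds on $r$ make the smoothing biases $|f_r(z)-f(z)|$ and $\norm{\nabla f_r(z)-\nabla f(z)}_2$ small, while the bounds on $\eta$ together with the second-moment estimate keep each step inside that ball. Applying the descent lemma to $z_{i+1}=z_i-\eta\widehat{\nabla f}$, taking conditional expectation, using unbiasedness, and finally invoking the \emph{local} gradient dominance \eqref{eq:LGD} to replace $\norm{\nabla f(z_i)}_2^2$ by $\mu_\delta(f(z_i)-J^\star)$, I would arrive at a one-step inequality of the form
\begin{equation*}
\mathbb{E}[f(z_{i+1})-J^\star\mid z_i]\leq (1-c\,\eta\mu_\delta)\big(f(z_i)-J^\star\big)+\text{(bias + variance error)},
\end{equation*}
valid \emph{whenever} $z_i\in\mathcal{G}_{10\delta^{-1}}$, where $c\in(0,1)$ is an absolute constant and the chosen $r,\eta$ force the additive error below $c\,\eta\mu_\delta\epsilon/2$.

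The main obstacle, and the part demanding the most care, is precisely that the contraction above holds only while the iterate stays in $\mathcal{G}_{10\delta^{-1}}$. I would therefore introduce the stopping time $\tau:=\min\{i:z_i\notin\mathcal{G}_{10\delta^{-1}}\}$ and study the stopped process $f(z_{i\wedge\tau})-J^\star$. The one-step inequality shows that, after subtracting the small stationary error level, this process is essentially a nonnegative supermartingale started at $\Delta_0$; a maximal inequality of Ville/Doob type then yields $\mathbb{P}(\tau\le T)=\mathbb{P}\big(\sup_i (f(z_{i\wedge\tau})-J^\star)\ge 10\delta^{-1}\Delta_0\big)\le \delta$, which is exactly how the factor $10\delta^{-1}$ is calibrated against the confidence $1-\delta$. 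On the complementary event $\{\tau>T\}$, of probability at least $1-\delta$, every iterate lies in $\mathcal{G}_{10\delta^{-1}}$, so the contraction may be unrolled $T$ times to give
\begin{equation*}
f(z_T)-J^\star\leq (1-c\,\eta\mu_\delta)^{T}\Delta_0+\text{(accumulated error)}\leq \epsilon,
\end{equation*}
where the prescribed $T=\frac{4}{\eta\mu_\delta}\log\!\big(\frac{4\Delta_0}{\delta\epsilon}\big)$ drives the geometric term below $\epsilon/2$ while the accumulated error is kept below $\epsilon/2$ by the choice of $r$ and $\eta$. Finally, invoking the equivalence $J(\mathbf{K}_T)=f(z_T)$ for $\mathbf{K}_T=\text{vec}^{-1}(Pz_T)$ established earlier, this reads $J(\mathbf{K}_T)-J^\star\le\epsilon$ with probability at least $1-\delta$, as claimed.
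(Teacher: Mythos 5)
Your overall route is the same as the paper's: a one-point zeroth-order estimator whose second moment is bounded through the disturbance bounds $W,V$ and the excitation $\lambda_{\mathbf{w}},\lambda_{\mathbf{v}}$ (giving the $D^2 f(z_0)^2 d^2/r^2$ variance scale), a one-step descent inequality using local smoothness and the \emph{local} gradient dominance constant, a stopping time $\tau$ for exiting $\mathcal{G}_{10\delta^{-1}}$, and a supermartingale/Doob maximal inequality to control $\mathbb{P}(\tau\le T)$. All of this matches the paper's two-part structure (a generalization of Malik et al.'s theorem, plus the lemmas bounding $G_\infty$ and $G_2$ for $LQ_\mathcal{K}$).

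There is, however, a genuine gap in your final step. Your one-step inequality is a bound on the \emph{conditional expectation} $\mathbb{E}[f(z_{i+1})-J^\star\mid z_i]$. Unrolling it $T$ times yields a bound on $\mathbb{E}[\Delta_T 1_{\tau>T}]$, not a pathwise bound on $f(z_T)-J^\star$ on the event $\{\tau>T\}$; the claim ``on the complementary event \ldots the contraction may be unrolled to give $f(z_T)-J^\star\le\epsilon$'' does not follow, since the iterates remain random even when confined to the sublevel set. The paper closes this by proving the stronger statement $\mathbb{E}[\Delta_T 1_{\tau>T}]\le \epsilon\delta/2$ and then applying Markov's inequality, which costs an extra factor of $\delta$: the failure probability is split as $\mathbb{P}(\Delta_T\ge\epsilon)\le \frac{1}{\epsilon}\mathbb{E}[\Delta_T 1_{\tau>T}]+\mathbb{P}(\tau\le T)\le \delta/2+\delta/2$. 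This is also why your parameter tuning is off: you drive the accumulated error below $\epsilon/2$ and spend the entire $\delta$ on the exit event, whereas the theorem requires the stationary error to be of order $\epsilon\delta$ (the source of one of the powers of $\delta$ in the $\delta^3$ appearing in the step-size bound; the other two come from $G_2\le(20\delta^{-1}dDf(z_0)/r)^2$). Without the Markov step and the corresponding $\delta$-dependent error budget, the stated confidence $1-\delta$ is not reached.
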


Theorem~\ref{th:convergence_analysis} yields a constant probability suboptimality guarantee based on the analysis technique of \cite{malik2018derivative}; there, 
the infinite-horizon LQR  problem with exact state measurements and no information structure was addressed. Our result extends this analysis as follows. First, we consider distributed control problems, given noisy 
output information and allow for inclusion of both noise on the initial state, process noise and measurement noise. We achieve this by bounding the variance of the gradient estimate in our \preprintswitch{Lemma~13, which is reported in Appendix~C of our Arxiv report \cite{furieri2019learning}}{Lemma~\ref{le:G}, which is reported in Appendix~\ref{app:theorem}}.   Second, we allow for a success probability $1-\delta$ for any $\delta>0$ and show how $\delta$ affects the sample-complexity. Last, we observe that a \emph{local} gradient dominance constant valid on $\mathcal{G}_{10\delta^{-1}}$ is sufficient for the analysis, whereas \cite{malik2018derivative} considered a global one. Nonetheless, we note that the finite-horizon framework enjoys a significant simplification because any control policy leads to a finite cost and the feasible region of control policies is always connected. 
Extension to infinite-horizon requires further work. 

Based on Theorem~\ref{th:convergence_analysis}, the model-free sample-complexity scales as  $\mathcal{O}\left(\frac{d^2}{\epsilon^2\delta^4}\log{\frac{1}{\epsilon\delta}}\right)$. For the standard centralized LQR problem, model-based methods (e.g. \cite{dean2017sample}) can enjoy a better scaling of $\mathcal{O}\left(\frac{d}{\epsilon^2}\log{\frac{1}{\delta}}\right)$, but extension of these methods to the general $LQ_\mathcal{K}$ is non-trivial due to non-existence 
of a convex reformulation in general. 
Interestingly, the scaling with respect to the suboptimality gap $\epsilon$ is practically unaffected despite using a model-free method.

\subsection{Experiments for distributed control}

\begin{figure}[htbp]
\floatconts
{fig:example2}
{\caption{\scriptsize In Figure~\ref{fig:a} we plotted 1) the  average number of steps over $10$ runs of Algorithm~\ref{algo} needed to achieve $7$ increasingly tight precision levels from $\epsilon = 0.2$ to $\epsilon = 0.02$ and 2) the sample-complexity $T$ predicted by Theorem~\ref{th:convergence_analysis}, when $\eta$ is scaled as $\eta=\mathcal{O}\left( \epsilon r^2\right)$ and $r$ is scaled as $r=\mathcal{O}( \sqrt{\epsilon})$. In Figure~\ref{fig:b}, we plotted the convergence behaviour, highlighting the maximum and minimum cost achieved at each iteration among the $10$ runs. }}
{%
\subfigure{%
\label{fig:a}
\includegraphics[width=0.47\textwidth]{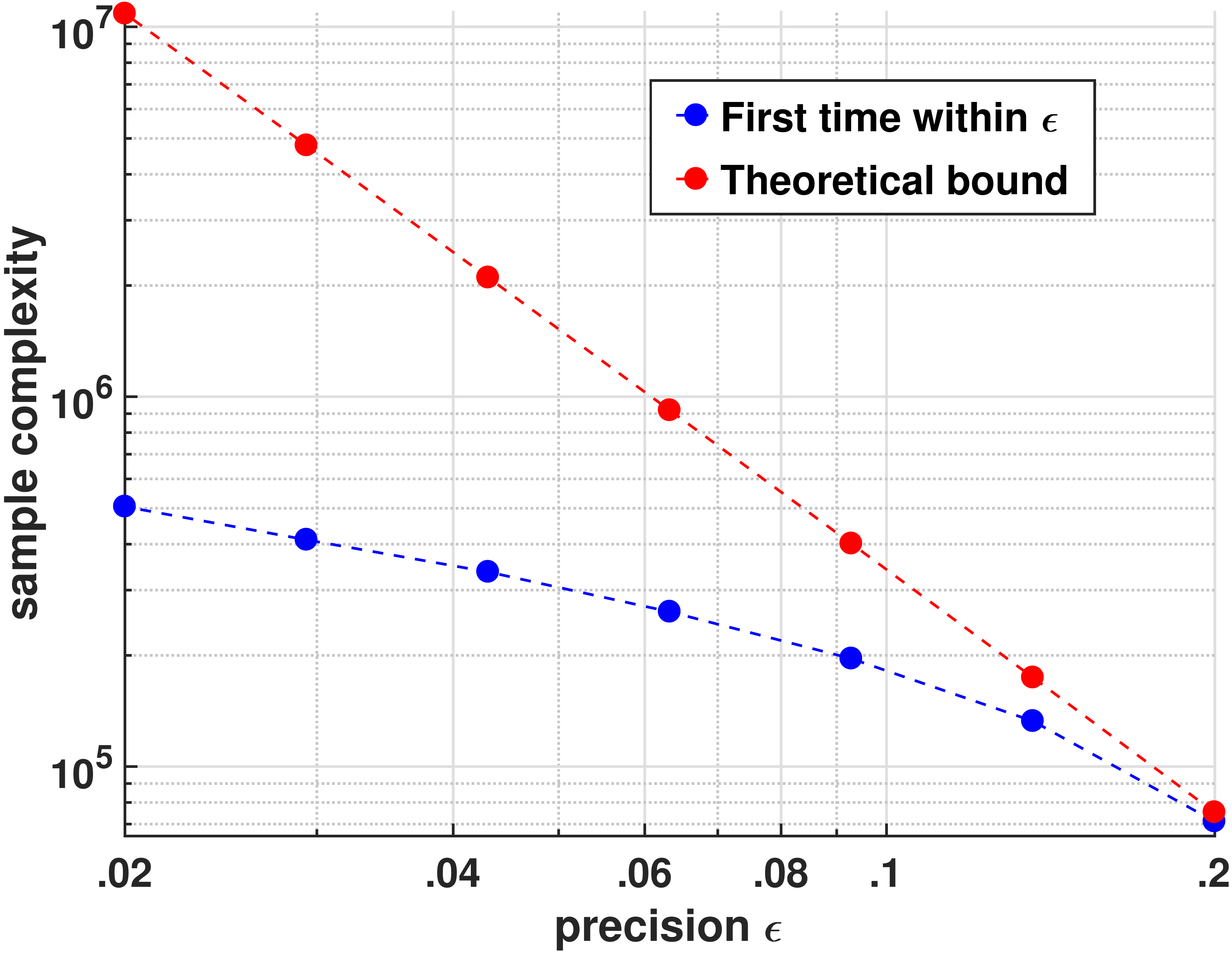}
}  \quad 
\subfigure{%
\label{fig:b}
\includegraphics[width=0.47\textwidth]{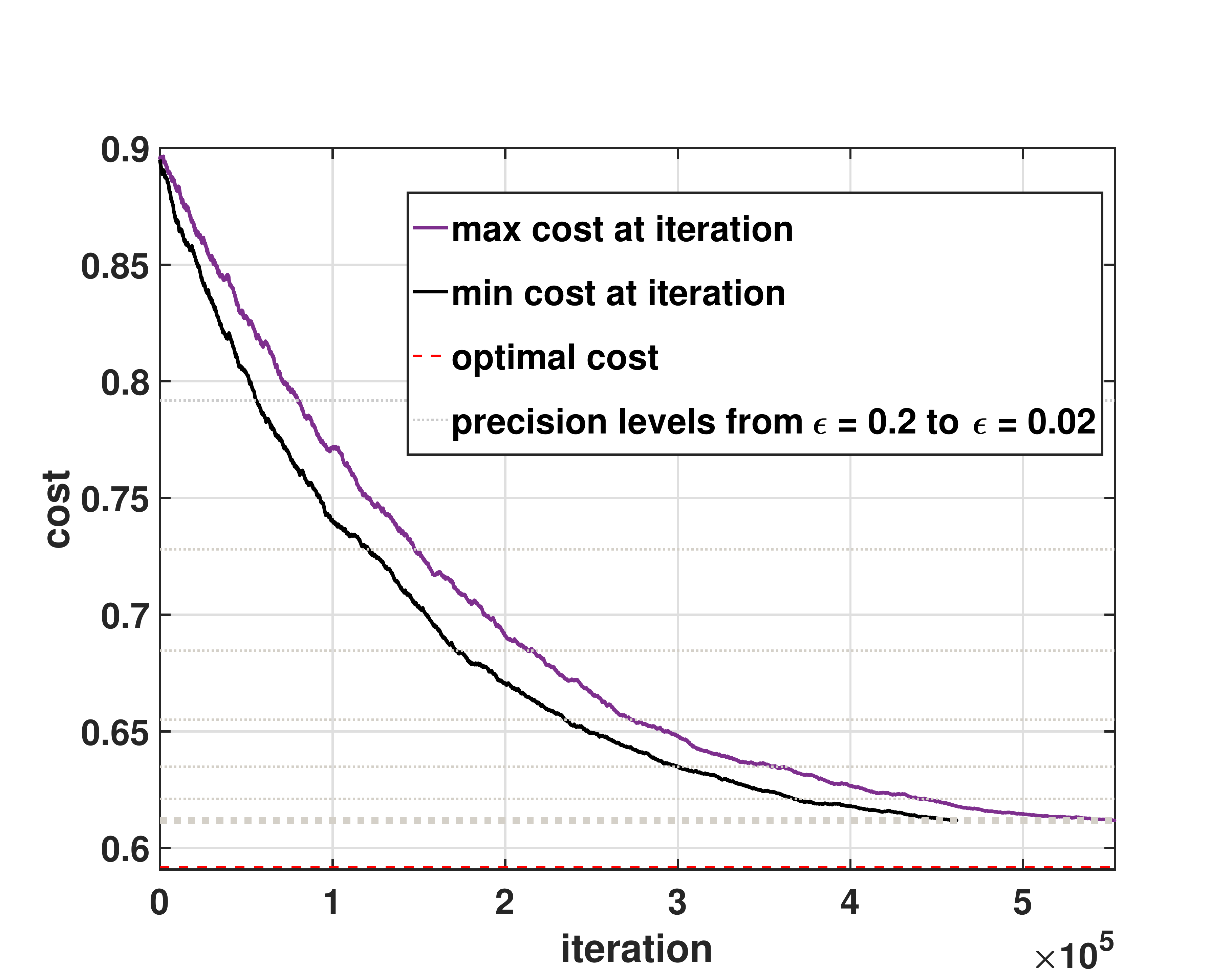}
}
}
\end{figure}
To validate our results, we considered problem $LQ_\mathcal{K}$ for $A_t=A$, $B_t=\begin{bmatrix}1&-1&0\end{bmatrix}^\mathsf{T}$, $C_t=I$ for  $t=0,1,2$,  $\mu_0=10^{-1}\times\begin{bmatrix}1&-1&1\end{bmatrix}^\mathsf{T}$,  with $\mathbf{K} \in \mathcal{K}=\text{Sparse}(\mathbf{S})$, where
\begin{equation*}
A=\begin{bmatrix}
1&0&-10\\-1&1&0\\0&0&1
\end{bmatrix}\,, \quad \mathbf{S}=\begin{bmatrix} 1 & 0&0 \\
1 & 1 & 0 \end{bmatrix}  
\otimes \begin{bmatrix}
1&0&0\end{bmatrix}\,.
\end{equation*}
Furthermore, we consider  additive initial state uncertainty uniformly distributed in the interval $[-10^{-2},10^{-2}]$, process noise and measurements noise $w_t$, $v_t$ uniformly distributed in the interval $[-10^{-3},10^{-3}]$ for every $t$.  The  cost function weights are chosen as $M_t=\frac{1}{4}I$ and $R_t=\frac{1}{4}I$ at each $t$. It is easy to verify that $\mathbf{KCP}_{12}\mathbf{K} \in \text{Sparse}(\mathbf{S})$ for any $\mathbf{K} \in \text{Sparse}(\mathbf{S})$; hence, $\mathcal{K}$ is QI with respect to $\mathbf{CP}_{12}$ and Theorem~\ref{th:convergence_analysis} holds.  Figure~\ref{fig:a}-\ref{fig:b} shows that the sample-complexity scales significantly better than the one predicted by Theorem~\ref{th:convergence_analysis} with respect to $\epsilon$, thus validating the corresponding bounds for this example.  Additional details and considerations on selecting $\eta$ and $r$ are reported in \preprintswitch{Appendix~D of our Arxiv report \cite{furieri2019learning}}{Appendix~\ref{app:experiments}}.




\section{Conclusions}
Motivated by the challenges of solving model-based distributed optimal control problems, we studied model-free policy learning subject to 
subspace constraints. By drawing a novel connection between gradient dominance and QI, we derived sample-complexity bounds on learning the globally optimal distributed controller for a class of problems including QI problems and other instances; for these, the available model-based learning techniques might not converge to a global optimum.
One exciting future direction is to extend
these results to infinite-horizon, by bridging the gap between dynamical controller synthesis and a gradient-descent landscape. We also envision including safety constraints. Furthermore, significantly sharpening our sample-complexity bounds might be possible with potentially more refined analysis. 


 \bibliography{references2}
\preprintswitch{}{
\newpage

\appendix


\section{Properties of the cost function}
\label{ap:propertiesJ}
In this appendix, we present a useful lemma that summarizes important properties of the cost function $J(\mathbf{K})$, which we use for our proofs. The first three points of Lemma~\ref{le:properties} provide the explicit expression of $J(\mathbf{K})$ as a function of $\mathbf{K}$ and its reformulation as a strongly convex function after applying an appropriate change of variables. Note that here we are still not enforcing an information structure. The fourth point of Lemma~\ref{le:properties} plays a key role in proving the local gradient dominance property in Theorem~\ref{th:PL2}. 

Results similar to Lemma~\ref{le:properties} are also presented  by \cite{furieri2019first}; however, \cite{furieri2019first} considered a slightly different cost function where the state trajectories are penalized directly. In the black-box setting considered here, it is not possible to observe the states, and hence it is only relevant to penalize the output trajectories in the cost. 

\begin{lemma}
\label{le:properties}
Let us define the operator $H:\mathbb{R}^{mN \times p(N+1)} \rightarrow \mathbb{R}^{mN \times p(N+1)}$ as
\begin{align}
\label{eq:H_operator}
&H(\mathbf{Q})=(I+\mathbf{Q}\mathbf{CP}_{12})^{-1}\mathbf{Q}\,.
\end{align}

The following facts hold.

\begin{enumerate}
\item The cost function $J(\mathbf{K})$ defined in \eqref{eq:cost} admits the following expression:
\begin{align}
J(\mathbf{K})&=\mathbb{E}_{\mathbf{w,v}}[\mathbf{y}^\mathsf{T}\mathbf{M}\mathbf{y}+\mathbf{u}^\mathsf{T}\mathbf{R}\mathbf{u}]\nonumber\\
&=\norm{\mathbf{M}^{\frac{1}{2}}\mathbf{C}(I-\mathbf{P}_{12}\mathbf{KC})^{-1}\mathbf{P}_{11}\mathbf{\Sigma}_w^{\frac{1}{2}}}_F^2+\norm{\mathbf{M}^{\frac{1}{2}}(I-\mathbf{C}\mathbf{P}_{12}\mathbf{K})^{-1}\mathbf{\Sigma}_v^{\frac{1}{2}}}_F^2\nonumber\\
&+\norm{\mathbf{R}^{\frac{1}{2}}\mathbf{K}(I-\mathbf{C}\mathbf{P}_{12}\mathbf{K})^{-1}\mathbf{C}\mathbf{P}_{11}\mathbf{\Sigma}_w^{\frac{1}{2}}}_F^2+\norm{\mathbf{R}^{\frac{1}{2}}\mathbf{K}(I-\mathbf{C}\mathbf{P}_{12}\mathbf{K})^{-1}\mathbf{\Sigma}_v^{\frac{1}{2}}}_F^2\label{eq:cost_K}\\
&+\norm{\mathbf{M}^{\frac{1}{2}}\mathbf{C}(I-\mathbf{P}_{12}\mathbf{KC})^{-1}\mathbf{P}_{11}\bm{\mu}_w}_2^2+\norm{\mathbf{R}^{\frac{1}{2}}\mathbf{K}(I-\mathbf{C}\mathbf{P}_{12}\mathbf{K})^{-1}\mathbf{C}\mathbf{P}_{11}\bm{\mu}_w}_2^2\,,\nonumber
\end{align}
where $\mathbf{M}=\text{\emph{blkdg}}(M_0,M_1,\ldots,M_N)$, $\mathbf{R}=\text{\emph{blkdg}}(R_0,\ldots R_{N-1})$, $\mathbf{\Sigma}_w=\text{\emph{blkdg}}(\Sigma_{\delta_0},I_{N-1} \otimes \Sigma_w)$,  $\mathbf{\Sigma}_v=I_N \otimes \Sigma_v$,  $\bm{\mu}_w=\begin{bmatrix}\mu_0^\mathsf{T}&0&\ldots&0\end{bmatrix}^\mathsf{T}$. 

\item Using~\eqref{eq:H_operator}, we have 
\begin{align}
J(H(\mathbf{Q}))&=\norm{\mathbf{M}^{\frac{1}{2}}\mathbf{C}(I+\mathbf{P}_{12}\mathbf{QC})\mathbf{P}_{11}\mathbf{\Sigma}_w^{\frac{1}{2}}}_F^2+\norm{\mathbf{M}^{\frac{1}{2}}(I+\mathbf{CP}_{12}\mathbf{Q})\mathbf{\Sigma}_v^{\frac{1}{2}}}_F^2 \nonumber\\
&+\norm{\mathbf{R}^{\frac{1}{2}}\mathbf{Q}\mathbf{C}\mathbf{P}_{11}\mathbf{\Sigma}_w^{\frac{1}{2}}}_F^2
+\norm{\mathbf{R}^{\frac{1}{2}}\mathbf{Q}\mathbf{\Sigma}_v^{\frac{1}{2}}}_F^2\label{eq:cost_Q}\\
&+\norm{\mathbf{R}^{\frac{1}{2}}\mathbf{Q}\mathbf{C}\mathbf{P}_{11}\bm{\mu}_w}_2^2+\norm{\mathbf{M}^{\frac{1}{2}}\mathbf{C}(I+\mathbf{P}_{12}\mathbf{QC})\mathbf{P}_{11}\bm{\mu}_w}_2^2\nonumber\,.
\end{align}
\item The function $J(H(\mathbf{Q}))$ is strongly convex in $\mathbf{Q}$.
\item The sublevel sets of $J(\mathbf{K})$ are compact.
\end{enumerate}
\end{lemma}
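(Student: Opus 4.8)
The plan is to establish the four facts in order: the first two are direct algebraic manipulations of the interconnection \eqref{eq:system_compact}--\eqref{eq:control_input_def}, the third is an immediate consequence of \eqref{eq:cost_Q}, and the fourth is where the structural features of the problem genuinely enter. A preliminary observation I would record up front is that $\mathbf{P}_{12}=(I-\mathbf{ZA})^{-1}\mathbf{ZB}$ is strictly block-lower-triangular (since $\mathbf{ZB}$ shifts down by one block and $(I-\mathbf{ZA})^{-1}$ is block-lower-triangular with identity diagonal), hence so is $\mathbf{CP}_{12}$; consequently $\mathbf{CP}_{12}\mathbf{K}$, $\mathbf{P}_{12}\mathbf{KC}$ and $\mathbf{QCP}_{12}$ are nilpotent for causal $\mathbf{K},\mathbf{Q}$, so every matrix inverse appearing below is well defined and equal to a finite Neumann series, and in particular $J$ is a polynomial.

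For Fact~1, I would eliminate $\mathbf{u}$ and $\mathbf{y}$ from \eqref{eq:system_compact}--\eqref{eq:control_input_def}: substituting $\mathbf{u}=\mathbf{Ky}$ into $\mathbf{y}=\mathbf{CP}_{11}\mathbf{w}+\mathbf{CP}_{12}\mathbf{u}+\mathbf{v}$ gives $\mathbf{y}=(I-\mathbf{CP}_{12}\mathbf{K})^{-1}(\mathbf{CP}_{11}\mathbf{w}+\mathbf{v})$ and $\mathbf{u}=\mathbf{K}(I-\mathbf{CP}_{12}\mathbf{K})^{-1}(\mathbf{CP}_{11}\mathbf{w}+\mathbf{v})$. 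Writing $J(\mathbf{K})=\mathbb{E}[\mathbf{y}^\mathsf{T}\mathbf{M}\mathbf{y}]+\mathbb{E}[\mathbf{u}^\mathsf{T}\mathbf{R}\mathbf{u}]$ as a sum of traces, using that $\mathbf{w}$ and $\mathbf{v}$ are independent with means $\bm{\mu}_w$ and $0$ and covariances $\mathbf{\Sigma}_w$ and $\mathbf{\Sigma}_v$ (so $\mathbb{E}[\mathbf{w}\mathbf{w}^\mathsf{T}]=\mathbf{\Sigma}_w+\bm{\mu}_w\bm{\mu}_w^\mathsf{T}$), and applying the push-through identity $(I-\mathbf{CP}_{12}\mathbf{K})^{-1}\mathbf{C}=\mathbf{C}(I-\mathbf{P}_{12}\mathbf{KC})^{-1}$ to the $\mathbf{w}$-terms, yields the six summands of \eqref{eq:cost_K}, which are exactly the covariance-$\mathbf{w}$, covariance-$\mathbf{v}$ and mean-$\mathbf{w}$ contributions of the $\mathbf{M}$-cost and of the $\mathbf{R}$-cost.

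For Fact~2, I would rearrange $\mathbf{K}=H(\mathbf{Q})=(I+\mathbf{QCP}_{12})^{-1}\mathbf{Q}$ into $\mathbf{Q}=\mathbf{K}(I-\mathbf{CP}_{12}\mathbf{K})^{-1}$, and from this deduce $(I-\mathbf{CP}_{12}\mathbf{K})^{-1}=I+\mathbf{CP}_{12}\mathbf{Q}$, $\mathbf{K}(I-\mathbf{CP}_{12}\mathbf{K})^{-1}=\mathbf{Q}$, and (again via the push-through identity) $(I-\mathbf{P}_{12}\mathbf{KC})^{-1}=I+\mathbf{P}_{12}\mathbf{QC}$. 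Substituting these termwise into \eqref{eq:cost_K} produces \eqref{eq:cost_Q}. Fact~3 is then immediate from \eqref{eq:cost_Q}: every summand there is a squared Frobenius or Euclidean norm of an \emph{affine} map of $\mathbf{Q}$, hence convex, and the single summand $\norm{\mathbf{R}^{1/2}\mathbf{Q}\mathbf{\Sigma}_v^{1/2}}_F^2=\mathrm{vec}(\mathbf{Q})^\mathsf{T}(\mathbf{\Sigma}_v\otimes\mathbf{R})\mathrm{vec}(\mathbf{Q})$ is strongly convex because $\mathbf{R}\succ 0$ and $\mathbf{\Sigma}_v\succ 0$ give $\mathbf{\Sigma}_v\otimes\mathbf{R}\succ 0$; a strongly convex term plus convex terms is strongly convex.

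Finally, for Fact~4 I would combine Fact~3 with the nilpotency observation. Strong convexity of $\mathbf{Q}\mapsto J(H(\mathbf{Q}))$ makes it coercive, so each of its sublevel sets $\{\mathbf{Q}:J(H(\mathbf{Q}))\le\alpha\}$ is compact. Because $\mathbf{QCP}_{12}$ is nilpotent for causal $\mathbf{Q}$, $H$ is a polynomial, hence continuous, bijection from causal $\mathbf{Q}$ onto causal $\mathbf{K}$, with inverse $\mathbf{K}\mapsto\mathbf{K}(I-\mathbf{CP}_{12}\mathbf{K})^{-1}$; therefore the sublevel set $\{\mathbf{K}:J(\mathbf{K})\le\alpha\}$ is precisely the image under $H$ of the compact set $\{\mathbf{Q}:J(H(\mathbf{Q}))\le\alpha\}$, and is thus compact. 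I expect Fact~4 to be the main obstacle: it is the only step that uses the block-triangular structure of $\mathbf{CP}_{12}$ rather than pure algebra, and it routes through the change of variables together with the continuous-image-of-a-compact-set argument; the only other place requiring care is the bookkeeping in Fact~1 that matches each of the six norm terms to the correct covariance or mean contribution.
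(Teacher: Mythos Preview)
Your proposal is correct and follows essentially the same route as the paper: the same closed-loop substitution and independence/mean-covariance bookkeeping for Fact~1, the same inversion $\mathbf{Q}=\mathbf{K}(I-\mathbf{CP}_{12}\mathbf{K})^{-1}$ and push-through identities for Fact~2, the same ``sum of convex terms with $\norm{\mathbf{R}^{1/2}\mathbf{Q}\mathbf{\Sigma}_v^{1/2}}_F^2$ strongly convex'' argument for Fact~3, and the same nilpotency-of-$\mathbf{QCP}_{12}$ plus continuous-image-of-a-compact-set argument for Fact~4. If anything, your version is slightly more explicit (the Kronecker form in Fact~3, the bijection with explicit inverse in Fact~4) than the paper's.
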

\begin{proof}
\begin{enumerate}
\item
Substituting~\eqref{eq:control_input_def} into~\eqref{eq:system_compact}, we have 
$$
    \mathbf{x}=(I-\mathbf{P}_{12}\mathbf{KC})^{-1}(\mathbf{P}_{11}\mathbf{w}+\mathbf{P}_{12}\mathbf{Kv}).    
$$
Then, it is not difficult to derive the closed-loop dynamics:
\begin{equation}\label{eq:closed_loop_K}
\begin{bmatrix}
\mathbf{y} \\
\mathbf{u}
\end{bmatrix} = \begin{bmatrix}
\mathbf{C}(I-\mathbf{P}_{12}\mathbf{KC})^{-1}\mathbf{P}_{11} & (I-\mathbf{CP}_{12}\mathbf{K})^{-1} \\
\mathbf{KC}(I-\mathbf{P}_{12}\mathbf{KC})^{-1}\mathbf{P}_{11}& \mathbf{K}(I-\mathbf{CP}_{12}\mathbf{K})^{-1}
\end{bmatrix}\begin{bmatrix}
\mathbf{w} \\
\mathbf{v}
\end{bmatrix}.
\end{equation}
For notational simplicity, we define 
$$
    \begin{bmatrix}
        \Phi_{yw} & \Phi_{yv} \\
        \Phi_{uw} & \Phi_{uv}
    \end{bmatrix}:= \begin{bmatrix}
\mathbf{C}(I-\mathbf{P}_{12}\mathbf{KC})^{-1}\mathbf{P}_{11} & (I-\mathbf{CP}_{12}\mathbf{K})^{-1} \\
\mathbf{KC}(I-\mathbf{P}_{12}\mathbf{KC})^{-1}\mathbf{P}_{11}& \mathbf{K}(I-\mathbf{CP}_{12}\mathbf{K})^{-1}
\end{bmatrix}.
$$
Then, we have 
$$
    \begin{aligned}
    \mathbf{y}^\tr\mathbf{M}\mathbf{y} &= \left(\Phi_{yw}\mathbf{w} + \Phi_{yv}\mathbf{v}\right)^\tr\mathbf{M}\left(\Phi_{yw}\mathbf{w} + \Phi_{yv}\mathbf{v}\right),\\
    \mathbf{u}^\tr\mathbf{R}\mathbf{u} &= \left(\Phi_{uw}\mathbf{w} + \Phi_{uv}\mathbf{v}\right)^\tr\mathbf{R}\left(\Phi_{uw}\mathbf{w} + \Phi_{uv}\mathbf{v}\right).
    \end{aligned}
$$

Observe that $\mathbb{E}_{\mathbf{w},\mathbf{v}}[\mathbf{w}^\mathsf{T}X\mathbf{v}]=0$ for any matrix $X$ of appropriate dimensions because $\mathbf{w}$ and $\mathbf{v}$ are independent, and the mean of $\mathbf{v}$ is $0$. Then, we have
\begin{equation}
\begin{aligned}
\label{eq:Phi_operators}
J(\mathbf{K})=&\;\mathbb{E}_{\mathbf{w}}\left[\mathbf{w}^\mathsf{T}\left(\Phi_{yw}^\tr \mathbf{M} \Phi_{yw} + \Phi_{uw}^\tr \mathbf{R} \Phi_{uw} \right)\mathbf{w}\right]  \\
&+ \mathbb{E}_{\mathbf{v}}\left[\mathbf{v}^\mathsf{T}\left(\Phi_{yv}^\tr \mathbf{M} \Phi_{yv} + \Phi_{uv}^\tr \mathbf{R} \Phi_{uv} \right)\mathbf{v}\right].
\end{aligned}
\end{equation}
Also, for any $X$ of compatible dimensions, we have   \begin{align*}
  &\mathbb{E}_{\mathbf{w}}\left(\mathbf{w}^\mathsf{T}X\mathbf{w}\right)=\text{Trace}(X\mathbf{\Sigma}_w)+\bm{\mu}_w^\mathsf{T}X\bm{\mu}_w\,,\\
  &\mathbb{E}_{\mathbf{v}}\left(\mathbf{v}^\mathsf{T}X\mathbf{v}\right)=\text{Trace}(X\mathbf{\Sigma}_v)\,.
  \end{align*}
  Recall that $\norm{X}_F^2=\text{Trace}(X^\mathsf{T}X)$, and it is straightforward to show that~\eqref{eq:Phi_operators} is the same with~\eqref{eq:cost_K}. 
  
  \item Using the change of variables in~\eqref{eq:H_operator}, \emph{i.e.},
  $$
    \mathbf{K} = H(\mathbf{Q}) := (I+\mathbf{Q}\mathbf{C}\mathbf{P}_{12})^{-1}\mathbf{Q},
  $$
  we can verify that 
  $$
    \mathbf{Q} = H^{-1}(\mathbf{K}) = \mathbf{K}(I - \mathbf{C}\mathbf{P}_{12}\mathbf{K})^{-1}.
  $$
  Furthermore, it is not difficult to derive that 
  $$
    \begin{bmatrix}
\mathbf{C}(I-\mathbf{P}_{12}\mathbf{KC})^{-1}\mathbf{P}_{11} & (I-\mathbf{CP}_{12}\mathbf{K})^{-1} \\
\mathbf{KC}(I-\mathbf{P}_{12}\mathbf{KC})^{-1}\mathbf{P}_{11}& \mathbf{K}(I-\mathbf{CP}_{12}\mathbf{K})^{-1}
\end{bmatrix} =  \begin{bmatrix}
\mathbf{C}(I + \mathbf{P}_{12}\mathbf{Q}\mathbf{C})\mathbf{P}_{11} & I + \mathbf{C}\mathbf{P}_{12}\mathbf{Q} \\
\mathbf{Q}\mathbf{C}\mathbf{P}_{11}& \mathbf{Q}
\end{bmatrix}. 
  $$
  Then, it is straightforward to verify the expression for $J(H(\mathbf{Q}))$ in~\eqref{eq:cost_Q}. 
  
  
  \item Since $J(H(\mathbf{Q}))$ is a sum of convex functions of $\mathbf{Q}$, it suffices that one of the addends is strongly convex. The addend $\norm{\mathbf{R}^{\frac{1}{2}}\mathbf{Q}\mathbf{\Sigma}_v^{\frac{1}{2}}}_F^2$ is strongly convex in $\mathbf{Q}$ due to $\mathbf{R},\mathbf{\Sigma}_v\succ 0$. 
  
  \item  Notice that any causal control policy $\mathbf{K}$ is such that the corresponding $\mathbf{Q}=H^{-1}(\mathbf{K})$ is causal, that is, $\mathbf{Q}$ is a block lower-triangular matrix.   By construction, $\mathbf{P}_{12}$ is a strictly block lower-triangular matrix and $\mathbf{C}$ is block-diagonal. Therefore, $\mathbf{QCP}_{12}$ is a strictly block lower-triangular matrix, and the matrix $\mathbf{QCP}_{12}$ is nilpotent. 
  
  It follows that 
  $$ \text{det}(I+\mathbf{QCP}_{12})=1. $$ 
  Thus,  $H(\mathbf{Q}) = (I+\mathbf{Q}\mathbf{C}\mathbf{P}_{12})^{-1}\mathbf{Q}$ is a matrix of polynomials in the entries of $\mathbf{Q}$. Then, $H(\cdot)$ is a bounded and continuous map, and the sublevel sets of $J(H(\mathbf{Q}))$ are compact in $\mathbf{Q}$ due to strong convexity. Finally, the sublevel sets of $J(\mathbf{K})$ are compact in $\mathbf{K}$ thanks to the boundedness and continuity of $H$.
  \end{enumerate}
\end{proof}

\section{QI background and local gradient dominance}
\label{ap:PL}


 The QI results of \cite{rotkowitz2006characterization,QIconvexity,furieri2019unified} revealed that a reformulation of $LQ_\mathcal{K}$ into an equivalent convex program is possible if and only if the  subspace constraint $\mathcal{K}$ satisfies (\ref{eq:QI}). In this case, $LQ_\mathcal{K}$ is equivalent to  
  \begin{equation}
\label{eq:convex_Q}
\min_{\mathbf{Q} \in \mathcal{K}}J(H(\mathbf{Q}))\,.
\end{equation}

To review this result, it is first convenient to define a vectorized version of the map $H(\cdot)$ and highlight some of its properties. Throughout this appendix, $P\in \mathbb{R}^{mpN(N+1)\times d}$ is a matrix such that its columns form an orthonormal basis of $\mathcal{K}$.
\begin{proposition}
\label{pr:bounded}
Let $h(\cdot):\mathbb{R}^d\rightarrow \mathbb{R}^d$ 
be the bijection defined as 
\begin{equation*}
h(q)=P^\mathsf{T}\text{\emph{vec}}\left(H\left(\text{\emph{vec}}^{-1}(Pq)\right)\right)\,.
\end{equation*}
We have that $h$ and $h^{-1}$ are continuous and bounded maps. 
\end{proposition}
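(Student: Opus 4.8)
\textbf{Proof plan for Proposition~\ref{pr:bounded}.}
The plan is to establish continuity and boundedness of $h$ and $h^{-1}$ by transferring the corresponding properties of the matrix-valued map $H$ and its inverse, which were already analysed in the proof of Lemma~\ref{le:properties}, point~4. First I would recall that, since $P$ has orthonormal columns spanning the vectorization of $\mathcal{K}$, the maps $q \mapsto \text{vec}^{-1}(Pq)$ and $\mathbf{K} \mapsto P^\mathsf{T}\text{vec}(\mathbf{K})$ are mutually inverse linear isometries between $\mathbb{R}^d$ and $\mathcal{K}$; in particular they are continuous with operator norm $1$, hence bounded on bounded sets. Thus $h$ is a composition $q \mapsto \text{vec}^{-1}(Pq) \mapsto H(\cdot) \mapsto P^\mathsf{T}\text{vec}(\cdot)$ of continuous maps, provided $H$ restricted to $\mathcal{K}$ is continuous with values in $\mathcal{K}$ — the latter inclusion being exactly the QI condition~\eqref{eq:QI}, which guarantees $H(\mathbf{Q}) \in \mathcal{K}$ whenever $\mathbf{Q} \in \mathcal{K}$ so that applying $P^\mathsf{T}\text{vec}(\cdot)$ loses no information and $h$ is well-defined as a map into $\mathbb{R}^d$.

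Next I would argue that $H$ is in fact a polynomial map on $\mathcal{K}$. As shown in the proof of Lemma~\ref{le:properties}.4, any causal $\mathbf{Q}$ makes $\mathbf{Q}\mathbf{C}\mathbf{P}_{12}$ strictly block lower-triangular, hence nilpotent, so $\det(I+\mathbf{Q}\mathbf{C}\mathbf{P}_{12})=1$ and $H(\mathbf{Q})=(I+\mathbf{Q}\mathbf{C}\mathbf{P}_{12})^{-1}\mathbf{Q}$ has entries that are polynomials in the entries of $\mathbf{Q}$ (the inverse being computed by a finite Neumann series). Consequently $h$ is a polynomial map $\mathbb{R}^d \to \mathbb{R}^d$, hence continuous; and being continuous it is bounded on every bounded subset of $\mathbb{R}^d$, which is the boundedness claim. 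The same reasoning applies to $h^{-1}$: from Lemma~\ref{le:properties}.2 we have $H^{-1}(\mathbf{K})=\mathbf{K}(I-\mathbf{C}\mathbf{P}_{12}\mathbf{K})^{-1}$, and causality of $\mathbf{K}$ again forces $\mathbf{C}\mathbf{P}_{12}\mathbf{K}$ to be strictly block lower-triangular, hence nilpotent, so $(I-\mathbf{C}\mathbf{P}_{12}\mathbf{K})^{-1}$ is a finite Neumann series and $H^{-1}$ is likewise a polynomial (in particular everywhere-defined) map; under QI it maps $\mathcal{K}$ into $\mathcal{K}$, so $h^{-1}(q)=P^\mathsf{T}\text{vec}(H^{-1}(\text{vec}^{-1}(Pq)))$ is a well-defined polynomial, hence continuous and bounded on bounded sets. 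That $h$ and $h^{-1}$ are genuinely inverse to each other on $\mathbb{R}^d$ follows from $H$ and $H^{-1}$ being inverse bijections on causal matrices together with the isometry property of $P$.

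The main obstacle — really the only subtle point — is justifying that $h$ is a well-defined self-map of $\mathbb{R}^d$ rather than merely a map into a larger space: this requires that $H(\mathbf{Q})\in\mathcal{K}$ for all $\mathbf{Q}\in\mathcal{K}$, equivalently that $\mathcal{K}$ is closed under the transformation $H$. This is precisely where quadratic invariance enters, and I would spell out the standard equivalence (as in \cite{rotkowitz2006characterization,QIconvexity}) between~\eqref{eq:QI} and the invariance $H(\mathcal{K})=\mathcal{K}$, $H^{-1}(\mathcal{K})=\mathcal{K}$; once that is in place, everything else is a routine composition of polynomial (hence continuous) and linear isometric (hence bounded) maps, and the boundedness on bounded sets is immediate from continuity of polynomials. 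No hard estimates are needed.
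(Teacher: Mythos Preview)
Your proposal is correct and follows essentially the same route as the paper's proof: both arguments rest on the observation (already made in Lemma~\ref{le:properties}, point~4) that the nilpotency of $\mathbf{Q}\mathbf{C}\mathbf{P}_{12}$ and $\mathbf{C}\mathbf{P}_{12}\mathbf{K}$ makes $H$ and $H^{-1}$ polynomial maps, so that $h$ and $h^{-1}$ are vectors of multivariate polynomials and therefore continuous and bounded on bounded sets. Your write-up is in fact more careful than the paper's brief proof, since you explicitly flag that the well-definedness of $h$ as a bijection $\mathbb{R}^d\to\mathbb{R}^d$ (rather than a map into a larger ambient space) relies on the QI invariance $H(\mathcal{K})=\mathcal{K}$, a point the paper's proof takes for granted.
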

\begin{proof}
The key of the proof is to show that for every $q \in \mathbb{R}^p$, $h(q)$ and $h^{-1}(q)$ are vectors of multivariate polynomials. This follows from the fact that $H(\mathbf{Q})$ is a matrix of polynomials as shown in the proof of Lemma~\ref{le:properties} and that the operator $\text{vec}(\cdot)$ simply stacks these polynomials into a vector. Note that $h^{-1}(z) = P^{\tr}\text{vec}\left(H^{-1}\left(\text{vec}^{-1}\left(Pz\right)\right)\right)$, and we can reason in the same way for $h^{-1}$. Since polynomials are continuous and bounded whenever $q$ is bounded, we conclude the proof.
\end{proof}
Then, the standard QI result can be stated as follows.
\begin{proposition}[QI]
\label{pr:QI}
Define $g:\mathbb{R}^d\rightarrow \mathbb{R}$ as $g(q)=f(h(q))$. The following two statements are equivalent.
\begin{enumerate}
\item The subspace $\mathcal{K}$ is QI with respect to $\mathbf{CP}_{12}$.
\item Problem $LQ_\mathcal{K}$ is equivalent to the \emph{unconstrained} strongly convex program
\begin{equation}
\label{eq:strongly_convex_program_vectorized}
\minimize_{q \in \mathbb{R}^d} g(q)\,.
\end{equation}
\end{enumerate}
\end{proposition}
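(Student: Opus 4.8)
The plan is to reduce the entire statement to the single structural fact underlying the Rotkowitz--Lall theory: $\mathcal{K}$ is QI with respect to $\mathbf{CP}_{12}$ if and only if the feedback change of variables maps $\mathcal{K}$ onto itself, i.e. $H(\mathcal{K})=\mathcal{K}$. Once this fact is available, both implications follow by combining it with the strong convexity of $J(H(\cdot))$ from Lemma~\ref{le:properties}(3) and with the fact that the columns of $P$ form an orthonormal basis of $\mathcal{K}$, so that $PP^\mathsf{T}$ is the orthogonal projection onto $\text{vec}(\mathcal{K})$ and $P^\mathsf{T}P=I$.

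To establish the structural fact I would first recall, as in the proof of Lemma~\ref{le:properties}(4), that $\mathbf{Q}\mathbf{C}\mathbf{P}_{12}$ is strictly block lower triangular, hence nilpotent, for any causal $\mathbf{Q}$. Writing $G:=\mathbf{C}\mathbf{P}_{12}$, both maps expand as finite Neumann series, $H^{-1}(\mathbf{K})=\sum_{k\ge 0}\mathbf{K}(G\mathbf{K})^k$ and $H(\mathbf{Q})=\sum_{k\ge 0}(-1)^k\mathbf{Q}(G\mathbf{Q})^k$. Assuming QI, I would prove $\mathbf{K}(G\mathbf{K})^k\in\mathcal{K}$ for every $k$ by induction: polarizing the identity $\mathbf{K}G\mathbf{K}\in\mathcal{K}$ (replacing $\mathbf{K}$ by $\mathbf{K}_1+\lambda\mathbf{K}_2$ and subtracting the diagonal terms) yields $\mathbf{K}_1 G\mathbf{K}_2+\mathbf{K}_2 G\mathbf{K}_1\in\mathcal{K}$ for all $\mathbf{K}_1,\mathbf{K}_2\in\mathcal{K}$; choosing $\mathbf{K}_1=\mathbf{K}(G\mathbf{K})^a$ and $\mathbf{K}_2=\mathbf{K}(G\mathbf{K})^b$ with $a+b+1=k$ (both in $\mathcal{K}$ by the inductive hypothesis) gives $2\mathbf{K}(G\mathbf{K})^k\in\mathcal{K}$. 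Hence $H^{-1}(\mathcal{K})\subseteq\mathcal{K}$ and symmetrically $H(\mathcal{K})\subseteq\mathcal{K}$; since $H$ and $H^{-1}$ are mutually inverse bijections on the matrix space, $\mathcal{K}=H(H^{-1}(\mathcal{K}))\subseteq H(\mathcal{K})\subseteq\mathcal{K}$, so $H(\mathcal{K})=\mathcal{K}$. For the converse, I would exploit scaling: $H^{-1}(\alpha\mathbf{K})=\sum_{k\ge 0}\alpha^{k+1}\mathbf{K}(G\mathbf{K})^k\in\mathcal{K}$ for all $\alpha\in\mathbb{R}$; since this is a matrix polynomial in $\alpha$ lying in the subspace $\mathcal{K}$ for every $\alpha$, each coefficient lies in $\mathcal{K}$, and the $\alpha^2$ coefficient $\mathbf{K}G\mathbf{K}$ recovers QI.

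With the structural fact in hand, $(1)\Rightarrow(2)$ is direct. Under QI, $H(\mathbf{Q})\in\mathcal{K}$ for $\mathbf{Q}=\text{vec}^{-1}(Pq)$, so the projection $PP^\mathsf{T}$ acts as the identity on $\text{vec}(H(\mathbf{Q}))$ and therefore $\text{vec}^{-1}(Ph(q))=H(\mathbf{Q})$. Consequently $g(q)=J(H(\text{vec}^{-1}(Pq)))$, which is the strongly convex map $J(H(\cdot))$ of Lemma~\ref{le:properties}(3) composed with the injective linear map $q\mapsto\text{vec}^{-1}(Pq)$; since $P$ has full column rank, this composition is strongly convex. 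As $q$ ranges over $\mathbb{R}^d$, $\mathbf{Q}$ ranges over $\mathcal{K}$ and $H(\mathbf{Q})$ over $H(\mathcal{K})=\mathcal{K}$, so $\min_q g(q)=\min_{\mathbf{Q}\in\mathcal{K}}J(H(\mathbf{Q}))=\min_{\mathbf{K}\in\mathcal{K}}J(\mathbf{K})=J^\star$, and an optimal $q$ yields an optimal $\mathbf{K}=H(\text{vec}^{-1}(Pq))\in\mathcal{K}$, establishing the equivalence in (2).

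I expect the remaining implication $(2)\Rightarrow(1)$ to be the main obstacle, and I would treat it by contraposition. If $\mathcal{K}$ is not QI then, by the structural fact, $H(\mathcal{K})\neq\mathcal{K}$, so there is $\mathbf{Q}_0\in\mathcal{K}$ with $H(\mathbf{Q}_0)\notin\mathcal{K}$; the projection $PP^\mathsf{T}$ is then no longer the identity on $\text{vec}(H(\mathbf{Q}_0))$, and $g(q)=J(\Pi_\mathcal{K} H(\text{vec}^{-1}(Pq)))$ no longer coincides with $J(H(\cdot))$ on $\mathcal{K}$. The delicate point is that the extra projection $\Pi_\mathcal{K}$ makes $g$ a composition of $J$ restricted to $\mathcal{K}$ (which is nonconvex in general) with a nonlinear map, so one must argue that this genuinely either breaks strong convexity or shrinks the attainable set $\{\text{vec}^{-1}(Ph(q)):q\in\mathbb{R}^d\}$ to a strict subset of $\mathcal{K}$, forcing $\min_q g(q)>J^\star$; neither conclusion is automatic, since the projected image could a priori still cover $\mathcal{K}$. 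The safe route, which I would adopt as the backbone of this direction, is to invoke the ``only if'' half of the classical QI theorem of \cite{rotkowitz2006characterization} establishing that no such convexifying change of variables exists when QI fails; care must also be taken to fix the meaning of ``equivalence'' as equality of optimal values together with recoverability of an optimal $\mathbf{K}\in\mathcal{K}$, so that the failure mode is unambiguous.
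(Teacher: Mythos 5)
Your proposal is correct and, for the forward direction, is actually more careful than the paper's own argument. The paper's proof is very short: it asserts $PP^\mathsf{T}=I$ (which is false as stated, since $P$ has $d$ orthonormal columns in a higher-dimensional ambient space, so $PP^\mathsf{T}$ is only the orthogonal projection onto $\mathrm{vec}(\mathcal{K})$; the correct identity is $P^\mathsf{T}P=I$), silently drops this factor to get $g(q)=J\bigl(H(\mathrm{vec}^{-1}(Pq))\bigr)$, and then leans entirely on the classical Rotkowitz--Lall theorem, quoted just before the proposition, for the equivalence of $LQ_\mathcal{K}$ with $\min_{\mathbf{Q}\in\mathcal{K}}J(H(\mathbf{Q}))$ if and only if QI holds. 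You instead prove the key structural fact $H(\mathcal{K})=\mathcal{K}$ under QI from first principles (nilpotency of $\mathbf{Q}\mathbf{C}\mathbf{P}_{12}$, finite Neumann series, and the polarization--induction argument showing $\mathbf{K}(G\mathbf{K})^k\in\mathcal{K}$), and you use it to justify precisely why the projection $PP^\mathsf{T}$ acts as the identity on $\mathrm{vec}(H(\mathbf{Q}))$ --- this is the step the paper glosses over. Your polarization induction and the Vandermonde-type scaling argument for the converse of the structural fact are both sound. For $(2)\Rightarrow(1)$ you, like the paper, ultimately invoke the ``only if'' half of the classical QI theorem; this is consistent with the paper's treatment, which also does not prove that direction independently. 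The net comparison: the paper buys brevity by outsourcing everything to the cited QI literature, while your route makes the forward implication self-contained and repairs the $PP^\mathsf{T}=I$ slip, at the cost of a longer argument and an honest reliance on the literature only where the paper relies on it too.
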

\begin{proof}
 First, we have  $PP^\mathsf{T}=I$  since $P$ is the orthonormal basis matrix of the subspace $\mathcal{K}$. Then, by the definitions of $h(\cdot)$, $f(\cdot)$ and $g(\cdot)$, we compute
 \begin{align}
g(q)&=f\left(P^\mathsf{T}\text{vec}\left(H\left(\text{vec}^{-1}\left(Pq\right)\right)\right)\right)\nonumber\\
&=J\left(\text{vec}^{-1}\left(PP^\mathsf{T}\text{vec}\left(H\left(\text{vec}^{-1}\left(Pq\right)\right)\right) \right)\right)\nonumber\\
&=J\left(H(\text{vec}^{-1}(Pq))\right)\,, \label{eq:reformulated_q}
\end{align}
where we applied the fact that $f(z)=J(\text{vec}^{-1}(Pz))$. It follows that problem (\ref{eq:strongly_convex_program_vectorized}) is equivalent to minimizing (\ref{eq:reformulated_q})  over $\mathbb{R}^d$. Furthermore, since $P$ has a basis of $\mathcal{K}$ as its columns, $Pq$ spans $\mathcal{K}$ when $q$ spans over $\mathbb{R}^d$. Hence, minimizing (\ref{eq:reformulated_q}) is equivalent to solving problem (\ref{eq:convex_Q}). 
\end{proof}


Notice that if the system dynamics (\ref{eq:sys_disc}) are unknown, the system-dependent mapping $h$ is also unknown and thus one cannot apply zeroth-order optimization on the strongly convex function $g(q)$ directly. Instead, the strong-convexity property of $g$ can be exploited  to derive existence of a local gradient dominance constant for the original cost function $f$ as per Theorem~\ref{th:PL2}. We now report the proof of Theorem~\ref{th:PL2}.

\subsection{Proof of Theorem~\ref{th:PL2}}
Since $g$ is strongly convex thanks to QI, than its sublevel sets are compact \citep{boyd2004convex}, and the sublevel sets of $f$ must also be compact due to the fact that $h$ is a continuous and bounded map by Proposition~\ref{pr:bounded}. In addition, $g(q)$ has a unique stationary point due to its strong convexity. Then, $f(z)$ also has a unique stationary point since the mapping $z = h(q)$ is invertible.

We now turn to proving the local gradient dominance property, that is for any $\delta>0$ there exists $\mu_\delta$ such that \begin{equation}
\label{eq:gradient_dominance}
\mu_\delta(f(z)-J^\star)\leq  \norm{\nabla f(z)}_2^2\,, \quad \forall z \in \mathcal{G}_{10\delta^{-1}}\,.
\end{equation}
Let the functions $g(\cdot)$, $f(\cdot)$, $h(\cdot)$ be defined as per Proposition~\ref{pr:QI}. Since $g$ is strongly convex, there exists $\mu >0$ such that 
\begin{equation*}
    2\mu (g(q)-J^\star) \leq  \norm{\nabla g(q)}_2^2\,, \quad \forall q \in \mathbb{R}^d\,,
\end{equation*}
where $\mu$ is the strong-convexity constant of $g$ \citep{nguyen2017stochastic}. For every $q \in \mathbb{R}^d$, let $z=h(q)$. Then, thanks to the QI property,  we have 
$$f(z)=g(q), \quad  \forall q \in \mathbb{R}^d. $$ 

Let $\mathbf{J}_{h}:\mathbb{R}^d \rightarrow \mathbb{R}^{d \times d}$ denote the Jacobian function of $h(\cdot)$. Applying the derivative chain-rule leads to
\begin{align*}
    2\mu(f(z)-J^\star) &\leq \norm{\nabla g\left(h^{-1}(z)\right)}_2^2\,,\\
    &= \norm{\mathbf{J}_{h}\left(h^{-1}(z)\right) \nabla f\left(h\left(h^{-1}(z)\right)\right)}_2^2\\
    &\leq \norm{\mathbf{J}_{h}\left(h^{-1}(z)\right)}_F^2 \norm{\nabla f(z)}_2^2\,, \quad \forall z \in \mathbb{R}^d\,.
\end{align*}

Now, observe the Jacobian $\mathbf{J}_{h}\left(h^{-1}(z)\right)$ is a matrix of multivariate polynomials for every $z\in \mathbb{R}^d$. As such, each entry of $\mathbf{J}_{h}\left(h^{-1}(z)\right)$ is bounded on any compact set.  For the compact set $\mathcal{G}_{10\delta^{-1}}$ we denote
\begin{equation*}
    \tau=  \text{sup}_{z \in \mathcal{G}_{10\delta^{-1}}} \norm{\mathbf{J}_{h}\left(h^{-1}(z)\right)}_F^2\,,
\end{equation*}
which is a bounded constant. By setting $\mu_\delta=\frac{2\mu}{\tau}$, we obtain \eqref{eq:gradient_dominance}.

\subsection{Example of locally gradient dominated $LQ_\mathcal{K}$}
For clarity, we report an instance of $LQ_\mathcal{K}$ which is locally gradient dominated on its sublevel sets. Let $A_t=B_t=C_t=M_t=R_t=1$ for all $t=0,1,2$, $\mathbf{\Sigma}_w=\mathbf{\Sigma}_v=I$, $\mu_0=0$ and let us not enforce any subspace constraint. We have that
\begin{equation*}
    \mathbf{K}=\begin{bmatrix}a&0&0\\b&c&0\end{bmatrix}\,,\quad \mathbf{Q}=\begin{bmatrix}a_q&0&0\\b_q&c_q&0\end{bmatrix}=H^{-1}(\mathbf{K})=\begin{bmatrix}a&0&0\\b+ac&c&0\end{bmatrix}\,.
\end{equation*}
where $a,b,c\in \mathbb{R}$ are our decision variables and we can denote $z=\begin{bmatrix}a&b&c\end{bmatrix}^\mathsf{T}$ and $z_q=\begin{bmatrix}a_q&b_q&c_q\end{bmatrix}^\mathsf{T}$. One can verify that the cost function has the following expression:
\begin{equation*}
    J(\mathbf{K})=f(z)=(b+c+ac)^2+(b+ac)^2+2a^2+2c^2\,.
\end{equation*}
The mapping $h$ is given by 
\begin{equation*}
    z=h(z_q)=\begin{bmatrix}a_q&b_q-a_qc_q&c_q\end{bmatrix}^\mathsf{T}\,,
\end{equation*}
and we have that
\begin{equation*}
    g(z_q)=(b_q+c_q)^2+2a_q^2+b_q^2+2c_q^2\,,
\end{equation*}
is strongly convex with constant $\tau=5-\sqrt{5}$. Hence, $g$ is globally gradient dominated with a constant of $10-2\sqrt{5}$. The jacobian is given by
\begin{equation*}
    \mathbf{J}_h(h^{-1}(z))=\begin{bmatrix}1&0&0\\-c&1&-a\\0&0&1\end{bmatrix}\,.
\end{equation*}
Following the proof of Theorem~\ref{th:PL2}, we have that
\begin{equation*}
    (10-2\sqrt{5})f(z)\leq \norm{\mathbf{J}_h(h^{-1}(z))}_F^2\norm{\nabla f(z)}_2^2\,, \quad \forall z \in \mathbb{R}^3\,.
\end{equation*}

Since $\norm{\mathbf{J}_h(h^{-1}(z))}_F$ is bounded on any compact set we conclude that the gradient dominance property holds locally on any compact set. However, since it is unbounded on the real space, we cannot conclude that a global gradient dominance constant exists for $f$. 

Notice that in \cite{fazel2018global,malik2018derivative} etc. \emph{global} gradient dominance on the set of stabilizing controllers can be proven because the feasible region is bounded. In this paper, the feasible region is unbounded and therefore we need an adapted notion of local gradient dominance. As we prove in Theorem~\ref{th:convergence_analysis}, a local gradient dominance constant valid on the sublevel set $\mathcal{G}_{10\delta^{-1}}$ is sufficient to prove convergence; we observe that such a constant would also be sufficient to prove convergence in \cite{fazel2018global,malik2018derivative} and therefore derive sharper sample-complexity bounds.

\subsection{Example of non-QI $LQ_\mathcal{K}$ with local gradient dominance constant}
Consider the system given by $$A_t=\begin{bmatrix}1&2\\-1&-3\end{bmatrix},~~B_t,C_t,M_t,R_t,\Sigma_w,\Sigma_{\delta_0},\Sigma_v=I\,,$$ for $t=0,1,2$, with $\mu_0=\begin{bmatrix}0&1\end{bmatrix}^\mathsf{T}$. Let $\mathcal{K}$ be the subspace of matrices in the form $$\mathbf{K}=\begin{bmatrix}z_1&0&0&0&0&0\\0&z_2&0&0&0&0\\0&0&z_1&0&0&0\\0&0&0&z_2&0&0\end{bmatrix}$$ for any $z_1,z_2 \in \mathbb{R}$. The corresponding problem $LQ_\mathcal{K}$ is non-QI, yet it admits compact sublevel sets and a gradient dominance constant.

Indeed, it is immediate to verify that $\mathbf{KCP}_{12}\mathbf{K} \not \in \mathcal{K}$ for $\mathbf{K} \in \mathcal{K}$ in general. Therefore, QI does not hold. The cost function is given by $f(z_1,z_2)=4z_1^4+8z_1^3+30z_1^2+18z_1z_2-36z_1+6z_2^4-42z_2^3+151z_2^2-222z_2+191$.  It is easy to check similar to \cite{furieri2019first} that the corresponding Hessian is positive-definite for every $z_1,z_2 \in \mathbb{R}$. Therefore, $f(z)$ is strongly convex and it admits compact sublevel sets and a global gradient dominance constant. 

\section{Proof of Theorem~\ref{th:convergence_analysis}}
\label{app:theorem}
The proof consists of two parts. First, we prove a more general theorem which extends \cite[Theorem~1]{malik2018derivative} to the case of arbitrary constant probabilities and only relies on existence of a local gradient dominance constant, as opposed to \cite[Theorem~1]{malik2018derivative} where a global one is used.   Second, we derive bounds on the noisy gradient estimation in Algorithm~\ref{algo} for $LQ_\mathcal{K}$, where we allow for the superposition of noise on the initial state, process noise and output measurement noise.

\subsection{A generalization of \cite[Theorem~1]{malik2018derivative}}

Consider Algorithm~\ref{algo_Malik} below. 

\begin{algorithm}[h]
	\caption{Stochastic Zeroth-Order Method}
	\label{algo_Malik}
	\begin{algorithmic}[1]
		\STATE Input: $z_0$, number of iterations $T$, stepsize $\eta>0$ and smoothing radius $r>0$.
		\FOR{$t = 0, \ldots, T-1$}
		\STATE Sample $\xi_t\sim \mathcal{D}$ and $u_t\in \mathbb{S}_r$ uniformly at random
		\STATE $\tilde{\nabla}\leftarrow V(z_t+u_t,\xi_t)\frac{d}{r^2}u_t$
		\STATE $z_{t+1}\leftarrow x_t-\eta \tilde{\nabla}$
		\ENDFOR
		\RETURN $z_T$
	\end{algorithmic}
\end{algorithm}

The goal is to prove the following convergence result about Algorithm~\ref{algo_Malik}.
\begin{theorem}
\label{theorem_generic}
Let $v:\mathbb{R}^d\rightarrow \mathbb{R}$ be continuously differentiable and defined as $v(z)=\mathbb{E}_{\xi\sim\mathcal{D}}[V(z,\xi)]$. Consider the iteration of Algorithm~\ref{algo_Malik}. Define the set
\begin{equation*}
\mathcal{G}_{10\delta^{-1}}=\{x|~v(z)-v(z^\star)\leq 10\delta^{-1}(v(z_0)-v(z^\star)) \}\,,
\end{equation*}
where $z^\star$ is a global minimum of $v$ and $0<\delta<1$. Assume that $v$ is $(L_\delta,\rho_0)$ locally Lipschitz and $(M_\delta,\rho_0)$ locally smooth at every $z \in \mathcal{G}_{10\delta^{-1}}$, 
in the sense that
\begin{align*}
&|v(z')-v(z)|\leq L_\delta \norm{z'-z}_2\,,\\
 &\norm{\nabla v(z')-\nabla v(z)}_2\leq M_\delta \norm{z'-z}_2\,,
\end{align*}
for every $z',z\in \mathcal{G}_{10\delta^{-1}}$ 
such that $\norm{z'-z}_2\leq \rho_0$. 
Also assume that $\mu_\delta>0$ is a local gradient dominance constant for $v$ valid on $\mathcal{G}_{10\delta^{-1}}$.  For $u\sim \text{\emph{Unif}}( \mathbb{S}_r)$, define 
\begin{equation*}
G_\infty=\sup_{z \in \mathcal{G}_{10\delta^{-1}}}\norm{\frac{d}{r^2}V(z+u,\xi)u}_2\,, \;
G_2=\frac{d^2}{r^4}\sup_{z \in \mathcal{G}_{10\delta^{-1}}}\mathbb{E}\left[\norm{V(z+u,\xi)u-\mathbb{E}\left[V(z+u,\xi)u|~u\right]}_2^2\right]\,.
\end{equation*}
 Finally, define $\Delta_0=v(z_0)-v(z^\star)$. Then, by choosing the stepsize $\eta$ and the smoothing radius $r$ in Algorithm~\ref{algo_Malik} according to
\begin{align*}
&\eta \leq \min \left\{ \frac{\epsilon \mu_\delta \delta}{40 M_\delta G_2},\frac{1}{2M_\delta},\frac{\rho_0}{G_\infty}  \right\},\\
&r \leq  \min \left\{\frac{\min\left(\frac{1}{2M_\delta},\frac{\rho_0}{L_\delta}\right) \mu_\delta}{2M_\delta} \sqrt{\frac{\delta\epsilon}{40}},\frac{1}{2 M_\delta}\sqrt{\frac{\delta\epsilon \mu_\delta}{5}}, \rho_0\right\}\,,
\end{align*}
we have that for any given $\epsilon>0$ small enough such that $\epsilon \log{\left( \frac{4\Delta_0}{\delta\epsilon}\right)}<\frac{16}{\delta}\Delta_0$, after $T = \frac{4}{\eta \mu} \log \left(\frac{4 \Delta_0}{\delta \epsilon}\right)$ steps the iterate $z_T$ of Algorithm~\ref{algo_Malik} satisfies the bound
\begin{equation*}
v(z_T)-v(z^\star)\leq \epsilon\,,
\end{equation*}
with probability greater than $1-\delta$.
\end{theorem}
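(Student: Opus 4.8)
The plan is to adapt the zeroth-order stochastic-gradient argument behind \cite[Theorem~1]{malik2018derivative}, the two changes being that only the \emph{local} gradient dominance on $\mathcal{G}_{10\delta^{-1}}$ is available and that we must keep explicit track of the confidence level $\delta$. I would proceed in four stages: (i) express the one-point estimate $\tilde\nabla$ through a smoothed surrogate $v_r$ of $v$ and bound its bias and conditional second moment on $\mathcal{G}_{10\delta^{-1}}$; (ii) combine the postulated local smoothness with a Young-type splitting and then the local gradient dominance to obtain a one-step contraction $\mathbb{E}[\Delta_{t+1}\mid\mathcal{F}_t]\le(1-a)\Delta_t+b$ with $\Delta_t:=v(z_t)-v(z^\star)$, valid \emph{whenever the iterates so far lie in $\mathcal{G}_{10\delta^{-1}}$}; (iii) use a stopped-process / maximal-inequality argument to show that with probability at least $1-\delta/2$ the whole trajectory $z_0,\dots,z_T$ stays in $\mathcal{G}_{10\delta^{-1}}$; and (iv) on that event unroll the contraction and invoke Markov's inequality. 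The factor $10\delta^{-1}$ in the definition of $\mathcal{G}_{10\delta^{-1}}$, together with the hypothesis $\epsilon\log(4\Delta_0/(\delta\epsilon))<16\Delta_0/\delta$, is precisely what makes the two $\delta/2$ contributions in (iii)--(iv) close a union bound at level $\delta$.

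\textbf{The estimator and the one-step recursion.} By the smoothing identity (\cite[Lemma~6]{malik2018derivative}) one has $\mathbb{E}[\tilde\nabla\mid\mathcal{F}_t]=\nabla v_r(z_t)$ for an appropriate ball-averaged surrogate $v_r$, and $M_\delta$-smoothness yields $\norm{\nabla v_r(z)-\nabla v(z)}_2=O(M_\delta r)$ on $\mathcal{G}_{10\delta^{-1}}$, while by construction $\norm{\tilde\nabla}_2\le G_\infty$ with conditional variance at most $G_2$. The choices $r\le\rho_0$, $\eta G_\infty\le\rho_0$ guarantee $\norm{z_{t+1}-z_t}_2\le\rho_0$, so the local Lipschitz/smoothness bounds apply at every step; expanding $v(z_{t+1})$ by smoothness, taking $\mathbb{E}[\cdot\mid\mathcal{F}_t]$, using the bias bound and $\eta\le 1/(2M_\delta)$, and finally invoking $\norm{\nabla v(z_t)}_2^2\ge\mu_\delta\Delta_t$ (legitimate exactly when $z_t\in\mathcal{G}_{10\delta^{-1}}$) produces $\mathbb{E}[\Delta_{t+1}\mid\mathcal{F}_t]\le(1-a)\Delta_t+b$ with $a\asymp\eta\mu_\delta$ and $b\asymp\eta M_\delta^2 r^2+\eta^2 M_\delta G_2$. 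The stated bounds on $\eta$ and $r$ force $b/a=O(\delta\epsilon)$, and with $T=\frac{4}{\eta\mu_\delta}\log(4\Delta_0/(\delta\epsilon))$ they also force $(1-a)^T\Delta_0=O(\delta\epsilon)$.

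\textbf{Staying inside $\mathcal{G}_{10\delta^{-1}}$: the crux.} The obstruction is the circularity that the contraction holds only while the iterate is in $\mathcal{G}_{10\delta^{-1}}$, yet we need it to argue that the iterate stays there. I would resolve this via the exit time $\sigma:=\inf\{t:z_t\notin\mathcal{G}_{10\delta^{-1}}\}$ and two auxiliary processes. First, the stopped process $Y_t:=\Delta_{t\wedge\sigma}\ge0$ satisfies $\mathbb{E}[Y_{t+1}\mid\mathcal{F}_t]\le Y_t+b$ (on $\{\sigma>t\}$ this is the recursion with $1-a\le1$; on $\{\sigma\le t\}$ it is frozen), so $Y_t-bt$ is a supermartingale, and $Y_t-bt+bT\ge0$ for $t\le T$; Doob's maximal inequality then gives $\Pr[\max_{t\le T}Y_t\ge 10\delta^{-1}\Delta_0]\le(\Delta_0+bT)/(10\delta^{-1}\Delta_0)$, and the hypothesis $\epsilon\log(4\Delta_0/(\delta\epsilon))<16\Delta_0/\delta$ is exactly what makes $bT=O(\Delta_0)$, so this probability is at most $\delta/2$; hence $\Pr[\sigma>T]\ge1-\delta/2$. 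Second, track $\widetilde R_t:=\Delta_t\mathbf{1}_{\{\sigma>t\}}$: since $\mathbf{1}_{\{\sigma>t+1\}}\le\mathbf{1}_{\{\sigma>t\}}$, $\{\sigma>t\}\in\mathcal{F}_t$, and the recursion holds on $\{\sigma>t\}$, one gets $\mathbb{E}[\widetilde R_{t+1}\mid\mathcal{F}_t]\le(1-a)\widetilde R_t+b$, which unrolls to $\mathbb{E}[\Delta_T\mathbf{1}_{\{\sigma>T\}}]=\mathbb{E}[\widetilde R_T]\le(1-a)^T\Delta_0+b/a=O(\delta\epsilon)$.

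\textbf{Conclusion.} Markov's inequality applied to the last display gives $\Pr[\{\Delta_T>\epsilon\}\cap\{\sigma>T\}]\le\delta/2$, and a union bound with $\Pr[\sigma\le T]\le\delta/2$ yields $\Pr[\Delta_T>\epsilon]\le\delta$, i.e.\ $v(z_T)-v(z^\star)\le\epsilon$ with probability greater than $1-\delta$, as claimed. One minor technicality to handle along the way is that evaluating $V(z_t+u,\xi)$ needs $z_t+u$ to lie where $v$ and $V$ are still controlled; since $\norm{u}_2=r\le\rho_0$ and $r$ is forced small, $z_t+u$ stays in a mildly enlarged sublevel set, and in the $LQ_\mathcal{K}$ instantiation $f$ is a polynomial so the suprema $G_\infty,G_2$ are automatically finite — bounding them explicitly in terms of the closed-loop maps and the disturbance-norm bounds $W,V$ is the content of the second part of the proof of Theorem~\ref{th:convergence_analysis}, and is what introduces $D=\max(W^2/\lambda_{\mathbf{w}},V^2/\lambda_{\mathbf{v}})$ and the dimension $d$ into the parameter bounds there. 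I expect stage (iii) — setting up the stopped process and calibrating the constants so that both halves of the union bound land below $\delta/2$ — to be the main obstacle; stages (i)--(ii) are a careful but essentially routine reworking of the one-point zeroth-order analysis.
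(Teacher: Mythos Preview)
Your proposal is correct and follows essentially the same route as the paper: the paper also reduces to a stronger proposition asserting both $\mathbb{E}[\Delta_T \mathbf{1}_{\{\tau>T\}}]\le \epsilon\delta/2$ and $\Pr[\tau\le T]\le \delta/2$, proves the one-step recursion (their Lemma~\ref{le:lemma5}) exactly as in your stage~(ii), unrolls it against the indicator $\mathbf{1}_{\{\tau>t\}}$ as in your $\widetilde R_t$, and handles the exit probability via a stopped-process supermartingale and Doob's maximal inequality as in your stage~(iii). The only cosmetic difference is that the paper builds the nonnegative supermartingale directly as $Y_t=\Delta_{\min(\tau,t)}+(T-t)b$ rather than shifting $Y_t-bt$ by $bT$, which amounts to the same computation $\mathbb{E}[Y_0]=\Delta_0+bT$.
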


We proceed with the proof of Theorem~\ref{th:convergence_analysis}. As the proof method is based on the analysis of  \citep[Theorem~1]{malik2018derivative}, we will focus  on those aspects that require adaptation and/or extension, while explicitly referring to \cite[Theorem~1]{malik2018derivative} where appropriate.

For each iterate $z_t$, we define $\Delta_t = v(z_t) - v(z^*)$. For $0<\delta<1$, let $\tau:=\min \left\{t|~\Delta_t>10\delta^{-1}\Delta_0 \right\}$ be the first time instant when the iterate exits $\mathcal{G}_{10\delta^{-1}}$. Similar to \cite[Theorem~1]{malik2018derivative}, the proof is based on proving the following proposition stronger than Theorem~\ref{theorem_generic}. 

\begin{proposition}
\label{prop:stronger}
Setting the parameters as per Theorem~\ref{theorem_generic}, we have
\begin{equation*}
\mathbb{E}[\Delta_T 1_{\tau >T}]\leq \epsilon \frac{ \delta}{2}\,,
\end{equation*}
and  the event $\tau$ occurs after time step $T$ with probability greater than $1-\frac{\delta}{2}$. 
\end{proposition}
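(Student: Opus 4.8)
The plan is to follow the stopped-process argument of \cite[Theorem~1]{malik2018derivative}, but track the explicit dependence on $\delta$ and replace the global gradient dominance assumption with the local one valid on $\mathcal{G}_{10\delta^{-1}}$. First I would establish a one-step descent inequality: on the event $\{\tau > t\}$, the iterate $z_t$ lies in $\mathcal{G}_{10\delta^{-1}}$, so the local smoothness constant $M_\delta$ and the local gradient dominance constant $\mu_\delta$ both apply at $z_t$. Using $M_\delta$-smoothness, the standard descent lemma gives $v(z_{t+1}) \leq v(z_t) + \langle \nabla v(z_t), z_{t+1}-z_t\rangle + \tfrac{M_\delta}{2}\norm{z_{t+1}-z_t}_2^2$ whenever $\norm{z_{t+1}-z_t}_2 \leq \rho_0$, which is guaranteed by the choice $\eta \leq \rho_0/G_\infty$. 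Taking conditional expectation and using that $\mathbb{E}[\tilde\nabla \mid z_t] = \nabla v_r(z_t)$ is close to $\nabla v(z_t)$ (with the bias controlled by $r$ via $M_\delta$-smoothness, as in \cite[Lemma~6]{malik2018derivative}), together with the second-moment bound $\mathbb{E}[\norm{\tilde\nabla}_2^2 \mid z_t] \leq \|\nabla v_r(z_t)\|_2^2 + G_2$ coming from the definition of $G_2$, yields a bound of the form $\mathbb{E}[\Delta_{t+1}\mathbf{1}_{\tau>t+1} \mid \mathcal{F}_t] \leq (1 - \tfrac{\eta\mu_\delta}{2})\Delta_t\mathbf{1}_{\tau>t} + \tfrac{\eta^2 M_\delta G_2}{2} + (\text{bias terms in } r)$ on $\{\tau>t\}$, after invoking $\mu_\delta(v(z_t)-v(z^\star)) \leq \|\nabla v(z_t)\|_2^2$. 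The point of the parameter choices $r \lesssim \tfrac{1}{M_\delta}\sqrt{\delta\epsilon\mu_\delta}$ and $\eta \leq \tfrac{\epsilon\mu_\delta\delta}{40 M_\delta G_2}$ is precisely to make the additive error terms at most $\tfrac{\eta\mu_\delta\epsilon\delta}{8}$ or so, so that the recursion reads $\mathbb{E}[\Delta_{t+1}\mathbf{1}_{\tau>t+1}] \leq (1-\tfrac{\eta\mu_\delta}{2})\mathbb{E}[\Delta_t\mathbf{1}_{\tau>t}] + \tfrac{\eta\mu_\delta}{4}\cdot\tfrac{\delta\epsilon}{2}$.

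Second, I would unroll this recursion over $T = \tfrac{4}{\eta\mu_\delta}\log\!\left(\tfrac{4\Delta_0}{\delta\epsilon}\right)$ steps. The homogeneous part contracts $\Delta_0$ by a factor $(1-\tfrac{\eta\mu_\delta}{2})^T \leq \exp(-\tfrac{\eta\mu_\delta T}{2}) = \exp(-2\log\tfrac{4\Delta_0}{\delta\epsilon}) = \left(\tfrac{\delta\epsilon}{4\Delta_0}\right)^2 \leq \tfrac{\delta\epsilon}{4\Delta_0}$ (using $\delta\epsilon \leq 4\Delta_0$, which follows from the assumed smallness of $\epsilon$), so the contracted initial term is at most $\tfrac{\delta\epsilon}{4}$. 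The accumulated noise part is a geometric series summing to at most $\tfrac{\delta\epsilon}{2}\cdot\tfrac{\eta\mu_\delta/4}{\eta\mu_\delta/2} = \tfrac{\delta\epsilon}{4}$. Adding the two contributions gives $\mathbb{E}[\Delta_T\mathbf{1}_{\tau>T}] \leq \tfrac{\delta\epsilon}{2}$, which is the first claim.

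Third, for the high-probability statement that $\tau > T$, I would run essentially the same recursion but without the indicator and truncated at the stopping time $\tau$, i.e.\ analyze the stopped sequence $\Delta_{t\wedge\tau}$. The one-step bound shows $\mathbb{E}[\Delta_{(t+1)\wedge\tau}] \leq \mathbb{E}[\Delta_{t\wedge\tau}] + (\text{small error})$, and in fact $\{\Delta_{t\wedge\tau}\}$ is a nonnegative supermartingale up to the small additive drift; choosing the parameters so the cumulative drift over $T$ steps stays below, say, $\Delta_0$, one gets $\mathbb{E}[\Delta_{T\wedge\tau}] \leq 2\Delta_0$ (a crude bound suffices here). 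By the definition of $\tau$, on the event $\{\tau \leq T\}$ we have $\Delta_{\tau} > 10\delta^{-1}\Delta_0$, hence $\Delta_{T\wedge\tau} \geq \Delta_\tau > 10\delta^{-1}\Delta_0$ on that event; by Markov's inequality, $\mathbb{P}(\tau \leq T) \leq \mathbb{P}(\Delta_{T\wedge\tau} > 10\delta^{-1}\Delta_0) \leq \tfrac{\mathbb{E}[\Delta_{T\wedge\tau}]}{10\delta^{-1}\Delta_0} \leq \tfrac{2\Delta_0}{10\delta^{-1}\Delta_0} = \tfrac{\delta}{5} < \tfrac{\delta}{2}$. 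Hence $\tau > T$ with probability at least $1-\tfrac{\delta}{2}$.

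The main obstacle I anticipate is the bookkeeping in the first step: one must verify that \emph{every} error term — the smoothing bias $\|\nabla v_r(z_t) - \nabla v(z_t)\|_2$, the cross term between bias and true gradient, the step-size-squared variance term, and the requirement $\norm{z_{t+1}-z_t}_2 \leq \rho_0$ needed to even apply local smoothness — is simultaneously dominated by the four separate ceilings imposed on $\eta$ and $r$ in the theorem statement, with the $\delta$-dependence coming out exactly as $\delta^3 r^2$ in the final $\eta$ bound after $G_2$ is later expressed in terms of $D$, $d$, $r$, and $f(z_0)$ via Lemma~\ref{le:G}. Keeping track of which constant ($\tfrac{1}{2M_\delta}$, $\tfrac{\rho_0}{L_\delta}$, etc.) controls which error, and ensuring the stopped-process argument legitimately lets us use the \emph{local} constants $\mu_\delta, M_\delta, L_\delta$ (valid only on $\mathcal{G}_{10\delta^{-1}}$) rather than global ones, is the delicate part; the rest is the geometric-series and Markov-inequality routine above.
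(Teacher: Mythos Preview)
Your plan is correct and follows essentially the same route as the paper: a one-step descent inequality on $\{\tau>t\}$ (the paper's Lemma~\ref{le:lemma5}, with contraction factor $\tfrac{\eta\mu_\delta}{4}$ rather than your $\tfrac{\eta\mu_\delta}{2}$), an unrolling to obtain $\mathbb{E}[\Delta_T 1_{\tau>T}]\leq\tfrac{\epsilon\delta}{2}$, and a stopped-process/supermartingale argument for $\mathbb{P}(\tau\leq T)\leq\tfrac{\delta}{2}$. The only cosmetic difference is in the third step: the paper constructs the compensated supermartingale $Y_t=\Delta_{t\wedge\tau}+(T-t)\bigl(\tfrac{M_\delta\eta^2}{2}G_2+\tfrac{\eta\mu_\delta\epsilon\delta}{20}\bigr)$ and applies Doob's maximal inequality, whereas you apply Markov directly to $\Delta_{T\wedge\tau}$; both yield the same numerator $\Delta_0+T\cdot(\text{drift})$. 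One numerical correction: with the stated parameters and the hypothesis $\epsilon\log\!\bigl(\tfrac{4\Delta_0}{\delta\epsilon}\bigr)\leq\tfrac{16}{\delta}\Delta_0$, the accumulated drift over $T$ steps is bounded by $4\Delta_0$ (not $\Delta_0$), so $\mathbb{E}[\Delta_{T\wedge\tau}]\leq 5\Delta_0$ and Markov gives exactly $\mathbb{P}(\tau\leq T)\leq\tfrac{\delta}{2}$, not $\tfrac{\delta}{5}$.
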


Let us first verify that Proposition~\ref{prop:stronger} implies Theorem~\ref{theorem_generic}. We have by using the probability sum rule and the Markov inequality that
\begin{align*}
\mathbb{P}\{\Delta_T\geq \epsilon\} &\leq \mathbb{P}\{\Delta_T 1_{\tau>T} \geq \epsilon\}+\mathbb{P}\{1_{\tau\leq T}\}\\
&\leq \frac{1}{\epsilon} \mathbb{E}[\Delta_T 1_{\tau >T}]+\mathbb{P}\{1_{\tau \leq T}\}\\
&\leq \frac{\delta}{2}+\frac{\delta}{2} = \delta\,.
\end{align*}
This is exactly the claim of Theorem~\ref{th:convergence_analysis}. Let now $\mathbb{E}^t[\Delta_{t+1}]$ denote expectation conditioned on all the randomness up to time $t$.  To prove Proposition~\ref{prop:stronger} we use the following Lemma. 

\begin{lemma}
\label{le:lemma5}
Given a function with the properties stated in Theorem~\ref{theorem_generic}, suppose  Algorithm~\ref{algo_Malik} is run with $r$ and $\eta$ chosen as per the statement of Theorem~\ref{theorem_generic}. Then, for any $t \in \mathbb{N}$ such that $z_t \in \mathcal{G}_{10\delta^{-1}}$ we have
\begin{equation}
\label{eq:Lemma5}
\mathbb{E}^t[\Delta_{t+1}]\leq \left(1-\frac{\eta \mu_\delta}{4}\right) \Delta_t+\frac{M_\delta \eta^2}{2}G_2+\eta \mu_\delta \frac{\epsilon \delta}{20}\,.
\end{equation}
\end{lemma}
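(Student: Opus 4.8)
The plan is to prove the one-step contraction \eqref{eq:Lemma5} by chaining the local smoothness of $v$ along the stochastic update with the local gradient-dominance property, and then pushing the bias and the second moment of the one-point estimator into the prescribed schedules for $\eta$ and $r$. First I would verify that the local smoothness bound assumed in Theorem~\ref{theorem_generic} is applicable to the pair $(z_t,z_{t+1})$. Since the update is $z_{t+1}-z_t=-\eta\tilde\nabla$ with $\norm{\tilde\nabla}_2\le G_\infty$ on $\mathcal{G}_{10\delta^{-1}}$, the choice $\eta\le\rho_0/G_\infty$ ensures $\norm{z_{t+1}-z_t}_2\le\rho_0$, so local $(M_\delta,\rho_0)$-smoothness gives
\[
v(z_{t+1})\le v(z_t)-\eta\langle\nabla v(z_t),\tilde\nabla\rangle+\tfrac{M_\delta\eta^2}{2}\norm{\tilde\nabla}_2^2.
\]
Subtracting $v(z^\star)$ and taking the conditional expectation $\mathbb{E}^t[\cdot]$ produces
\[
\mathbb{E}^t[\Delta_{t+1}]\le\Delta_t-\eta\langle\nabla v(z_t),\mathbb{E}^t[\tilde\nabla]\rangle+\tfrac{M_\delta\eta^2}{2}\mathbb{E}^t[\norm{\tilde\nabla}_2^2].
\]

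Then I would treat the linear term using the smoothing identity $\mathbb{E}^t[\tilde\nabla]=\nabla v_r(z_t)$ recalled in the footnote, together with the standard bias bound $\norm{\nabla v_r(z_t)-\nabla v(z_t)}_2\le M_\delta r$ that follows from local smoothness. Writing $\langle\nabla v,\nabla v_r\rangle=\norm{\nabla v}_2^2+\langle\nabla v,\nabla v_r-\nabla v\rangle$ and applying a weighted Young's inequality that retains a quarter of the gradient term gives
\[
-\eta\langle\nabla v(z_t),\nabla v_r(z_t)\rangle\le-\tfrac{\eta}{4}\norm{\nabla v(z_t)}_2^2+\tfrac{\eta}{3}M_\delta^2r^2.
\]
Because $z_t\in\mathcal{G}_{10\delta^{-1}}$, the local gradient-dominance constant applies, so $\norm{\nabla v(z_t)}_2^2\ge\mu_\delta\Delta_t$, which converts the retained gradient term into the contraction $-\tfrac{\eta\mu_\delta}{4}\Delta_t$, i.e., the coefficient $1-\eta\mu_\delta/4$ in \eqref{eq:Lemma5}.

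The crux is the second moment $\mathbb{E}^t[\norm{\tilde\nabla}_2^2]$, which I would decompose, by conditioning on the sampling direction $u$, into an oracle-noise part and a sampling part. The oracle-noise part equals $\mathbb{E}^t[\norm{\tilde\nabla-\mathbb{E}^t[\tilde\nabla\mid u]}_2^2]$ and is bounded by $G_2$ by its very definition, contributing the clean $\tfrac{M_\delta\eta^2}{2}G_2$ term. The remaining sampling part equals $\tfrac{d^2}{r^2}\mathbb{E}_u[v(z_t+u)^2]$, which is of order $d^2/r^2$ but finite, since $v$ is bounded on the compact set $\mathcal{G}_{10\delta^{-1}}$ (Lemma~\ref{le:Lipschitzianity}); multiplied by $\tfrac{M_\delta\eta^2}{2}$ it is exactly what the $\eta$-schedule of Theorem~\ref{theorem_generic} (carrying the factors $r^2$, $d^{-2}$ and $\epsilon\mu_\delta\delta$) is engineered to absorb into the additive slack $\eta\mu_\delta\tfrac{\epsilon\delta}{20}$, and the $\tfrac{\eta}{3}M_\delta^2r^2$ bias residual from the previous step is likewise absorbed using the prescribed upper bound on $r$.

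The main obstacle is precisely this bookkeeping: tracking all $O(r^2)$ and $O(d^2\eta^2/r^2)$ residuals and showing, via the specific $\eta,r$ choices, that their sum does not exceed $\eta\mu_\delta\epsilon\delta/20$, while using only the sublevel-set-local versions of smoothness, Lipschitzness and gradient dominance. This locality is the key extension of \cite[Theorem~1]{malik2018derivative}, whose argument instead relied on global constants valid on the entire (bounded) feasible set, and a second subtlety I would be careful about is that the update could leave $\mathcal{G}_{10\delta^{-1}}$, which is why the statement is conditioned on $z_t\in\mathcal{G}_{10\delta^{-1}}$ and the global argument is deferred to the stopping-time analysis of Proposition~\ref{prop:stronger}.
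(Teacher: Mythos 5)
Your route is structurally the same as the paper's, which itself defers the core computation to \cite[Page~21]{malik2018derivative}: a local smoothness expansion along the step (justified, as you note, by $\eta\le\rho_0/G_\infty$), a smoothing-bias bound plus Young's inequality retaining $-\tfrac{\eta}{4}\norm{\nabla v(z_t)}_2^2$, and the gradient dominance constant invoked only at $z_t\in\mathcal{G}_{10\delta^{-1}}$ --- which is exactly the locality point the paper highlights as its sole departure from \cite{malik2018derivative}. Your bias residual $\tfrac{\eta}{3}M_\delta^2r^2$ differs in constants from the paper's $4\eta M_\delta^2 r^2/(\mu_\delta\theta_\delta^2)$ with $\theta_\delta=\min(1/(2M_\delta),\rho_0/L_\delta)$ (the $\rho_0/L_\delta$ factor reflects a more careful accounting of where local smoothness is actually available, since $z_t+u$ need not lie in $\mathcal{G}_{10\delta^{-1}}$), but this part of your argument is sound in spirit and absorbed by the prescribed $r$.

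The genuine gap is in your treatment of $\mathbb{E}^t[\norm{\tilde\nabla}_2^2]$. Your decomposition into the conditional variance given $u$ (which is what $G_2$ bounds by definition) plus $\mathbb{E}_u\bigl[\norm{\mathbb{E}^t[\tilde\nabla\mid u]}_2^2\bigr]=\tfrac{d^2}{r^2}\mathbb{E}_u[v(z_t+u)^2]$ leaves a term of order $d^2\sup v^2/r^2$ --- the notorious one-point sampling variance --- and your claim that the $\eta$-schedule absorbs it into $\eta\mu_\delta\tfrac{\epsilon\delta}{20}$ does not follow: Lemma~\ref{le:lemma5} only assumes the schedule of Theorem~\ref{theorem_generic}, where $\eta$ is constrained through $G_2$, $M_\delta$, $\rho_0$, $G_\infty$ alone, and the factors ``$r^2$, $d^{-2}$'' you invoke appear only in the instantiated schedule of Theorem~\ref{th:convergence_analysis} (obtained \emph{after} Lemma~\ref{le:G} replaces $G_2$ by $G_\infty^2$). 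If the oracle noise is small, $G_2$ as defined can be tiny while $\tfrac{d^2}{r^2}\mathbb{E}_u[v(z_t+u)^2]$ remains $\Theta(d^2/r^2)$, and your bound fails. The derivation the paper relies on instead writes $\mathbb{E}^t\norm{\tilde\nabla}_2^2$ as a variance term plus $\norm{\nabla v_r(z_t)}_2^2\le 2\norm{\nabla v(z_t)}_2^2+2M_\delta^2r^2$ and folds the $\norm{\nabla v(z_t)}_2^2$ contribution back into the negative gradient term via $\eta\le 1/(2M_\delta)$; this is precisely the origin of the extra $+\tfrac{\eta\mu_\delta}{4}\Delta_t$ and $M_\delta^3\eta^2r^2$ terms visible in \eqref{eq:bound_crude_E}, neither of which your accounting produces. (To be fair, even that route requires $G_2$ to dominate the \emph{unconditional} variance of $\tilde\nabla$, which the conditional-variance definition in Theorem~\ref{theorem_generic} does not literally guarantee; the paper is rescued only because Lemma~\ref{le:G} bounds $G_2$ by $G_\infty^2$, which dominates the full second moment. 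Your proof makes this soft spot visible but does not repair it.)
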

\begin{proof}
The first part is to use the local smoothness and a local gradient dominance constant to derive
\begin{equation}
\label{eq:bound_crude_E}
\mathbb{E}^t[\Delta_{t+1} - \Delta_{t}] \leq -\frac{\eta \mu_\delta}{2}\Delta_t+\frac{\eta \mu_\delta}{4}\Delta_t+4\frac{\eta M_\delta^2 r^2}{\mu_\delta \theta_\delta^2}+\frac{M_\delta \eta^2}{2}G_2+M_\delta^3\eta^2 r^2\,,
\end{equation}
where we set $\theta_\delta=\min\left(\frac{1}{2M_\delta},\frac{\rho_0}{L_\delta}\right)$. The proof of this fact follows exactly \cite[Page~21]{malik2018derivative}, so we do not report it in full here. The single difference to notice is that we do not require a global gradient dominance cost, but one valid on $\mathcal{G}_{10\delta^{-1}}$. This is because the gradient dominance property is only applied to bound $\norm{\Delta v(z_t)}_2^2$, where $z_t \in \mathcal{G}_{10\delta^{-1}}$.

Next, we apply the bounds on $\eta$ and $r$. Using $r\leq \frac{\theta_\delta \mu_\delta}{2M_\delta}\sqrt{\frac{\epsilon \delta}{40}}$ we obtain
\begin{equation*}
4\frac{\eta M_\delta^2 r^2}{\mu_\delta \theta_\delta^2} \leq \frac{\eta \mu_\delta \epsilon \delta}{40}\,.
\end{equation*}
Using $\eta\leq \frac{1}{2 M_\delta}$ and $r\leq\frac{1}{2 M_\delta}\sqrt{\frac{\delta\epsilon \mu_{\delta}}{5}}$ we obtain
\begin{equation*}
M_\delta^3 \eta^2 r^2 \leq \frac{\eta \mu_\delta \epsilon \delta}{40}\,.
\end{equation*}
By plugging these bounds into \eqref{eq:bound_crude_E} and rearranging the equation we complete the proof.
\end{proof}
We proceed with the proof of Proposition~\ref{prop:stronger}. In the case where $\tau >T$, we can directly apply Lemma~\ref{le:lemma5} to bound the quantity $\mathbb{E}^t[\Delta_{t+1}]$. In the case where $\tau \leq T$ we have $\mathbb{E}^t[\Delta_{t+1}]1_{\tau >t}=0$ by definition. Combining these two cases and using the fact that $\eta\leq \frac{\epsilon \mu_\delta \delta}{40 M_\delta G_2}$ we have that
\begin{align*}
\mathbb{E}[\Delta_{t+1}1_{\tau>t+1}]&\leq \left(1-\frac{\eta \mu_\delta}{4}\right)^{t+1}\Delta_0+\left(\frac{M_\delta \eta^2}{2}G_2+\eta \mu_\delta \frac{\epsilon \delta}{20}\right)\sum_{i=0}^t\left(1-\frac{\eta \mu_\delta}{4}\right)^i\\
& \leq \left(1-\frac{\eta \mu_\delta}{4}\right)^{t+1}\Delta_0+\frac{2M_\delta \eta G_2}{\mu_\delta}+4\epsilon\frac{\delta}{20}\,,\\
&\leq \left(1-\frac{\eta \mu_\delta}{4}\right)^{t+1}\Delta_0+\frac{\epsilon \delta}{4}\,.
\end{align*}
Now set $t+1=T$. We want to ensure that
\begin{equation*}
\left(1-\frac{\eta \mu_\delta}{4}\right)^{T}\Delta_0 +\frac{\epsilon \delta}{4}\leq \frac{\epsilon \delta}{2}\,. 
\end{equation*}
 It can be verified that the above holds for  $T= \frac{4}{\eta \mu_\delta}\log{\left(\frac{4\Delta_0}{\delta \epsilon}\right)}$.  We conclude that with the parameters chosen as per the statement of the Theorem we have $\mathbb{E}[\Delta_{T}1_{\tau > T}] \leq \frac{\epsilon \delta}{2}$.
 
 We now turn to establishing that the event $\tau\leq T$ happens with a probability lower than $\frac{\delta}{2}$.  Similar to \cite[Theorem~1]{malik2018derivative}, the key ingredient is to identify a random variable associated with the iterates which is a super-martingale. Then, we can exploit classical inequalities on the maximum value of this random value up to time $T$. For each $t=1,\ldots, T$,  we define the stopped process
 \begin{equation*}
 Y_t =\Delta_{\min(\tau,t)} +(T-t)\left(\frac{M_\delta \eta^2}{2}G_2+\frac{\eta \mu_\delta \epsilon \delta}{20}\right)\,,
 \end{equation*}
 where $\tau$ is the first time step when $z_t \not \in \mathcal{G}_{10\delta^{-1}}$. By performing simple substitutions, we utilize the same derivations of \cite{malik2018derivative} to obtain
 \begin{equation*}
 \mathbb{E}^t[Y_{t+1}]\leq Y_t\,,
 \end{equation*} 
 that is, the stopped process $Y_t$ is a super-martingale. 
 
 We can now apply Doob's inequality (see for instance \cite{durrett2019probability}) and by substituting the values for $T$ and $\eta$ and using the requirement that $\epsilon$ is small enough to satisfy $\epsilon \log{\left(\frac{4\Delta_0}{\delta \epsilon}\right)} \leq \frac{16}{\delta} \Delta_0$ we obtain:
 \begin{align*}
 \mathbb{P}\left\{\max_{t=1,\ldots,T}Y_t \geq 10\delta^{-1}\Delta_0\right\}&\leq \frac{\mathbb{E}[Y_0]}{10\delta^{-1}\Delta_0}\\
 &=\frac{1}{10\delta^{-1}\Delta_0}\left(\Delta_0+T\left(\frac{M_\delta \eta^2}{2}G_2+\frac{\eta \epsilon \mu_\delta \delta}{20}\right)\right)\\
 &\leq \frac{1}{10\delta^{-1}\Delta_0}\left(\Delta_0+\frac{2}{\mu_\delta} \log{\left(\frac{4\Delta_0}{\delta \epsilon}\right)M_\delta\eta G_2}+\log{\left(\frac{4\Delta_0}{\delta \epsilon}\right)}\frac{\epsilon \delta}{5}\right)\\
 & \leq \frac{1}{10\delta^{-1}\Delta_0}\left(\Delta_0+\log{\left(\frac{4\Delta_0}{\delta \epsilon}\right)}\frac{\epsilon \delta}{20}+\log{\left(\frac{4\Delta_0}{\delta \epsilon}\right)}\frac{\epsilon \delta}{5}\right)\\
 &\leq \frac{\delta}{10 \Delta_0}\left(\Delta_0+\frac{16}{20}\Delta_0+\frac{16}{5}\Delta_0\right)=\frac{\delta}{2}\,.
 \end{align*}
Proposition~\ref{prop:stronger} is now proved. As observed above, Theorem~\ref{theorem_generic} is also proved.

 \subsection{Bounds on $G_2$ and $G_\infty$}
 
Before bounding the quantities $G_2$ and $G_\infty$  for the case of solving problem $LQ_\mathcal{K}$ with Algorithm~\ref{algo}, we derive useful inequalities  as follows.

\begin{lemma}
\label{le:10JK0}
Let $r\leq \min \left(\frac{10\delta^{-1} f(z_0)}{L_\delta},\rho_0 \right)$ and $u$ such that $\norm{u}_2= r$. If $z \in \mathcal{G}_{10\delta^{-1}}$, then 
\begin{equation*}
f(z+u) \leq 20 \delta^{-1}f(z_0)\,.
\end{equation*}  
\end{lemma}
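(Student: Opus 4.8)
The plan is to split the bound on $f(z+u)$ into two contributions: the value $f(z)$ at the base point, controlled via membership in the sublevel set, and the increment $f(z+u)-f(z)$, controlled through local Lipschitzness with the prescribed smoothing radius $r$. The two contributions will each come out to $10\delta^{-1}f(z_0)$, adding up to the claimed bound.

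First I would bound $f(z)$. Since the cost in Lemma~\ref{le:properties} is a sum of squared norms, $f\geq 0$ everywhere and hence $J^\star\geq 0$; moreover $J^\star\leq f(z_0)$ and $\Delta_0=f(z_0)-J^\star\leq f(z_0)$. For $z\in\mathcal{G}_{10\delta^{-1}}$ we have $f(z)\leq J^\star+10\delta^{-1}\Delta_0$. Rewriting the right-hand side as $(1-10\delta^{-1})J^\star+10\delta^{-1}f(z_0)$ and using that $1-10\delta^{-1}<0$ for $0<\delta<1$ together with $J^\star\geq 0$, the first term is nonpositive, so $f(z)\leq 10\delta^{-1}f(z_0)$.

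Next I would control the increment. Invoking local Lipschitzness (Lemma~\ref{le:Lipschitzianity}) with $\norm{u}_2=r\leq\rho_0$ should give $f(z+u)\leq f(z)+L_\delta r$, and the radius bound $r\leq 10\delta^{-1}f(z_0)/L_\delta$ then yields $L_\delta r\leq 10\delta^{-1}f(z_0)$. Combining with the previous step, $f(z+u)\leq 10\delta^{-1}f(z_0)+10\delta^{-1}f(z_0)=20\delta^{-1}f(z_0)$, which is exactly the claim.

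The main obstacle is a domain subtlety in the Lipschitz step. The conclusion is strictly weaker than $z+u\in\mathcal{G}_{10\delta^{-1}}$, and indeed $z+u$ may leave $\mathcal{G}_{10\delta^{-1}}$, so the Lipschitz inequality cannot be invoked verbatim on the pair $(z,z+u)$ as stated for points \emph{inside} the sublevel set. I would close this gap using that $f$ is a polynomial, so $\nabla f$ is bounded on the compact $\rho_0$-neighborhood of $\mathcal{G}_{10\delta^{-1}}$; every point of the segment $\{z+su:s\in[0,1]\}$ stays within distance $\rho_0$ of $z\in\mathcal{G}_{10\delta^{-1}}$ and hence inside this neighborhood, so the mean-value inequality legitimizes $f(z+u)\leq f(z)+L_\delta r$ even when $z+u\notin\mathcal{G}_{10\delta^{-1}}$, provided $L_\delta$ is taken as the gradient bound on that neighborhood. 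This is consistent with the way $L_\delta$ is produced in Lemma~\ref{le:Lipschitzianity} from boundedness of polynomials on compact sets.
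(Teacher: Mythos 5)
Your proof is correct and follows essentially the same decomposition as the paper's: bound $f(z)$ by $10\delta^{-1}f(z_0)$ using membership in the sublevel set together with $J^\star\geq 0$, then bound the increment by $L_\delta r\leq 10\delta^{-1}f(z_0)$ via the Lipschitz estimate and the radius constraints. The domain subtlety you flag --- that $z+u$ may exit $\mathcal{G}_{10\delta^{-1}}$, so the inequality of Lemma~\ref{le:Lipschitzianity} does not apply verbatim to the pair $(z,z+u)$ --- is genuine and is glossed over in the paper's own proof; your repair, taking $L_\delta$ as a bound on $\nabla f$ over the compact $\rho_0$-neighborhood of $\mathcal{G}_{10\delta^{-1}}$ and applying the mean-value inequality along the segment, is the right way to make that step rigorous and is consistent with how $L_\delta$ is actually produced (boundedness of polynomials on compact sets).
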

\begin{proof}
First, we have 
$$
    \begin{aligned}
        |f(z+u)-f(z))|&\leq L_\delta r  \\
        &\leq 10\delta^{-1}f(z_0),
\end{aligned}
$$ 
where the first inequality comes from the definition of $L_\delta$  in \eqref{eq:Lipschitzianity_explicit} and $r\leq \rho_0$,  and the second inequality comes from $r\leq\frac{10\delta^{-1} f(z_0)}{L_\delta}$. Second, we have 
$$
    f(z)\leq 10\delta^{-1}f(z_0)
$$
because $z \in \mathcal{G}_{10\delta^{-1}}$ and $f(z^\star)\geq 0$. Combining the two inequalities above, we conclude that 
$$
    f(z+u)\leq f(z)+10\delta^{-1}f(z_0)\leq 20\delta^{-1}f(z_0).
$$
\end{proof}
\begin{lemma}
\label{le:bound_D}
For every $z \in \mathcal{G}_{10\delta^{-1}}$,  $r\leq \frac{10\delta^{-1}f(z_0)}{L_\delta}$ and $u$ such that $\norm{u}_2=r$ we have 
\begin{equation*}
\hat{\mathbf{y}}^\mathsf{T}\mathbf{M}\hat{\mathbf{y}}+\hat{\mathbf{u}}^\mathsf{T}\mathbf{R}\hat{\mathbf{u}}\leq D f(z+u) \leq 20\delta^{-1}D f(z_0)\,,
\end{equation*}
for every realization of the bounded disturbances, where $D=\max \left( \frac{W^2}{\lambda_\mathbf{w}},\frac{V^2}{\lambda_\mathbf{v}}\right)$,  with $W$ the value such that $\norm{\mathbf{w}}_2\leq W$ for all $\delta_0\sim\mathcal{D}_{\delta_0},~w_0,\ldots,w_{N-1}\sim \mathcal{D}_{w}$, $V$ the value such that   $\norm{\mathbf{v}}_2\leq V$ for all $v_0,\ldots,v_{N}\sim \mathcal{D}_{v}$, $\lambda_\mathbf{w}$ and $\lambda_\mathbf{v}$ are the minimum eigenvalues of $\mathbb{E}[\mathbf{w}\mathbf{w}^\mathsf{T}]$ and $\mathbb{E}[\mathbf{v}\mathbf{v}^\mathsf{T}]$ respectively.
\end{lemma}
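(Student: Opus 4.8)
The plan is to prove the two inequalities separately. The right-hand inequality $Df(z+u)\le 20\delta^{-1}Df(z_0)$ is immediate from Lemma~\ref{le:10JK0}, which under the stated smallness of $r$ yields $f(z+u)\le 20\delta^{-1}f(z_0)$; multiplying through by the nonnegative scalar $D$ finishes it. Hence all the work lies in the left-hand inequality, which compares a \emph{single} noise realization against the \emph{expected} cost $f(z+u)=\mathbb{E}_{\mathbf{w},\mathbf{v}}[\hat{f}]$.

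First I would express the realized cost as a positive-semidefinite quadratic form in the stacked disturbance. Writing $\hat{\mathbf{y}}=\Phi_{yw}\mathbf{w}+\Phi_{yv}\mathbf{v}$ and $\hat{\mathbf{u}}=\Phi_{uw}\mathbf{w}+\Phi_{uv}\mathbf{v}$ with the closed-loop maps of Lemma~\ref{le:properties} evaluated at $\mathbf{K}=\text{vec}^{-1}(P(z+u))$, and setting $\zeta=[\mathbf{w}^\tr,\mathbf{v}^\tr]^\tr$, I get $\hat{\mathbf{y}}^\tr\mathbf{M}\hat{\mathbf{y}}+\hat{\mathbf{u}}^\tr\mathbf{R}\hat{\mathbf{u}}=\zeta^\tr S\zeta$ with $S=\Phi^\tr\text{blkdg}(\mathbf{M},\mathbf{R})\Phi\succeq 0$, where $\Phi$ collects the four closed-loop maps. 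By the same computation that produced~\eqref{eq:Phi_operators}, using independence of $\mathbf{w},\mathbf{v}$ and $\mathbb{E}[\mathbf{v}]=0$, the expectation is $f(z+u)=\mathbb{E}[\zeta^\tr S\zeta]=\text{Trace}(S\Sigma)$, where $\Sigma=\mathbb{E}[\zeta\zeta^\tr]=\text{blkdg}(\mathbb{E}[\mathbf{w}\mathbf{w}^\tr],\mathbb{E}[\mathbf{v}\mathbf{v}^\tr])$ is block-diagonal and positive definite with block minimum eigenvalues $\lambda_\mathbf{w},\lambda_\mathbf{v}$.

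The crux is to sandwich the realization $\zeta^\tr S\zeta$ by its expectation $\text{Trace}(S\Sigma)$. I would use the Rayleigh-quotient bound $\zeta^\tr S\zeta=(\Sigma^{-1/2}\zeta)^\tr(\Sigma^{1/2}S\Sigma^{1/2})(\Sigma^{-1/2}\zeta)\le\lambda_{\max}(\Sigma^{1/2}S\Sigma^{1/2})\,\zeta^\tr\Sigma^{-1}\zeta\le\text{Trace}(\Sigma^{1/2}S\Sigma^{1/2})\,\zeta^\tr\Sigma^{-1}\zeta=f(z+u)\,\zeta^\tr\Sigma^{-1}\zeta$, where the middle step uses $\lambda_{\max}\le\text{Trace}$ for the PSD matrix $\Sigma^{1/2}S\Sigma^{1/2}$. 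It then remains to bound the scalar $\zeta^\tr\Sigma^{-1}\zeta=\mathbf{w}^\tr(\mathbb{E}[\mathbf{w}\mathbf{w}^\tr])^{-1}\mathbf{w}+\mathbf{v}^\tr(\mathbb{E}[\mathbf{v}\mathbf{v}^\tr])^{-1}\mathbf{v}$, which by $(\mathbb{E}[\mathbf{w}\mathbf{w}^\tr])^{-1}\preceq\lambda_\mathbf{w}^{-1}I$, $(\mathbb{E}[\mathbf{v}\mathbf{v}^\tr])^{-1}\preceq\lambda_\mathbf{v}^{-1}I$ and the disturbance bounds $\norm{\mathbf{w}}_2\le W$, $\norm{\mathbf{v}}_2\le V$ is controlled by $D=\max(W^2/\lambda_\mathbf{w},V^2/\lambda_\mathbf{v})$. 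Equivalently, one may split $S$ into its block-diagonal parts $S_w,S_v$ and bound each term separately, e.g. $\mathbf{w}^\tr S_w\mathbf{w}\le\lambda_{\max}(S_w)W^2\le\tfrac{W^2}{\lambda_\mathbf{w}}\text{Trace}(S_w\,\mathbb{E}[\mathbf{w}\mathbf{w}^\tr])$.

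The main obstacle I anticipate is conceptual rather than computational: we are bounding a deterministic per-realization quantity by a constant multiple of its expectation, which is possible only because the disturbances are bounded ($W,V<\infty$) and their second-moment matrices are nondegenerate ($\lambda_\mathbf{w},\lambda_\mathbf{v}>0$); this is precisely where those two hypotheses enter. A secondary subtlety is the cross-coupling between process and measurement noise within a single realization: the Rayleigh-quotient route above bounds the full quadratic form $\zeta^\tr S\zeta$ at once and therefore absorbs the cross term automatically through $S\succeq 0$, dispensing with any need to discard indefinite cross-contributions by hand.
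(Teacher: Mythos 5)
Your right-hand inequality is handled exactly as in the paper (Lemma~\ref{le:10JK0} plus multiplication by $D\ge 0$), but for the left-hand inequality you take a genuinely different route. The paper first proves the Loewner dominations $\hat{\mathbf{w}}\hat{\mathbf{w}}^\tr\preceq \frac{W^2}{\lambda_\mathbf{w}}\mathbb{E}[\mathbf{w}\mathbf{w}^\tr]$ and $\hat{\mathbf{v}}\hat{\mathbf{v}}^\tr\preceq \frac{V^2}{\lambda_\mathbf{v}}\mathbb{E}[\mathbf{v}\mathbf{v}^\tr]$, writes the realized cost as $\hat{\mathbf{w}}^\tr S_w\hat{\mathbf{w}}+\hat{\mathbf{v}}^\tr S_v\hat{\mathbf{v}}$, bounds each block by its own expectation times its own ratio, and takes the maximum of the two ratios to get $D$. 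You instead keep the joint quadratic form $\zeta^\tr S\zeta$ and use a Rayleigh-quotient/trace argument. Your worry about the cross terms is in fact the crux: the paper's decomposition applies the identity \eqref{eq:Phi_operators} --- which holds only under the expectation, where $\mathbb{E}[\mathbf{w}^\tr X\mathbf{v}]=0$ --- to a single realization, thereby discarding the pathwise cross contribution $2\hat{\mathbf{w}}^\tr(\Phi_{yw}^\tr\mathbf{M}\Phi_{yv}+\Phi_{uw}^\tr\mathbf{R}\Phi_{uv})\hat{\mathbf{v}}$, which need not be nonpositive. Note that your parenthetical ``equivalently, split $S$ into its block-diagonal parts'' would inherit exactly this defect and is therefore not equivalent to your main argument; it should be dropped.

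The genuine gap is in your final step. You have $\zeta^\tr\Sigma^{-1}\zeta=\mathbf{w}^\tr(\mathbb{E}[\mathbf{w}\mathbf{w}^\tr])^{-1}\mathbf{w}+\mathbf{v}^\tr(\mathbb{E}[\mathbf{v}\mathbf{v}^\tr])^{-1}\mathbf{v}\le \frac{W^2}{\lambda_\mathbf{w}}+\frac{V^2}{\lambda_\mathbf{v}}$, which is the \emph{sum} of the two ratios and can equal $2D$; asserting that this ``is controlled by $D$'' silently loses a factor of up to $2$. Your chain therefore rigorously proves $\hat{\mathbf{y}}^\tr\mathbf{M}\hat{\mathbf{y}}+\hat{\mathbf{u}}^\tr\mathbf{R}\hat{\mathbf{u}}\le \left(\frac{W^2}{\lambda_\mathbf{w}}+\frac{V^2}{\lambda_\mathbf{v}}\right)f(z+u)\le 2D\,f(z+u)$, not the stated bound with $D$. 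This is not a deficiency you can estimate away: a scalar instance with $\Phi_{yw}=\Phi_{yv}=1$ and $\delta_0,v_0$ Rademacher gives a realized cost of $4$ against $D\,f(z+u)=2$, so once the cross terms are retained the constant $\max$ cannot hold pathwise and must be replaced by the sum. In short, your argument is the careful one but establishes the lemma with $2D$ in place of $D$ (harmless downstream, since $D$ enters Theorem~\ref{th:convergence_analysis} only as a constant rescaling of $\eta$, $r$ and $T$), whereas the paper reaches $D$ only through the unjustified pathwise decoupling of the $\hat{\mathbf{w}}$- and $\hat{\mathbf{v}}$-contributions.
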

\begin{proof}
We first prove that
\begin{equation}
\label{eq:bounded_positivedefinite}
\hat{\mathbf{w}} \hat{\mathbf{w}}^\mathsf{T} \preceq \frac{W^2}{\lambda_\mathbf{w}}\mathbb{E}[\mathbf{w}\mathbf{w}^\mathsf{T}]\,,\quad \hat{\mathbf{v}} \hat{\mathbf{v}}^\mathsf{T} \preceq \frac{V^2}{\lambda_\mathbf{v}}\mathbb{E}[\mathbf{v}\mathbf{v}^\mathsf{T}]\,.
\end{equation} 
Since  $\mathbb{E}[\mathbf{w}\mathbf{w}^\mathsf{T}]$ is symmetric, we have that $v^\mathsf{T}\mathbb{E}[\mathbf{w}\mathbf{w}^\mathsf{T}]v\geq \lambda_\mathbf{w}\norm{v}_2^2$ for all $v \in \mathbb{R}^{n(N+1)}$. We also have by Cauchy-Schwarz that $v^\mathsf{T}\hat{\mathbf{w}}\hat{\mathbf{w}}^\mathsf{T}v \leq W^2\norm{v}_2^2$ for all $v \in \mathbb{R}^{n(N+1)}$. Combining these two inequalities we deduce that the matrix $W^2 \lambda_\mathbf{w}^{-1}\mathbb{E}[\mathbf{w}\mathbf{w}^\mathsf{T}]-\hat{\mathbf{w}}\hat{\mathbf{w}}^\mathsf{T}$ is positive-definite. The same reasoning holds for $\mathbb{E}[\mathbf{v}\mathbf{v}^\mathsf{T}]$ and hence \eqref{eq:bounded_positivedefinite} holds.

 Next, for any matrix $X$ of appropriate dimensions, \eqref{eq:bounded_positivedefinite} implies that
\begin{align*}
&\text{Trace}(X\hat{\mathbf{w}} \hat{\mathbf{w}}^\mathsf{T}X^\mathsf{T})=(X \hat{\mathbf{w}})^\mathsf{T}X\hat{\mathbf{w}}\leq \frac{W^2}{\lambda_\mathbf{w}} \mathbb{E}[\mathbf{w}^\mathsf{T}X^\mathsf{T}X \mathbf{w} ]\,.
\end{align*}
The same reasoning applies to $\hat{\mathbf{v}}$. Now let $\mathbf{K}=\text{vec}^{-1}(P(z+u)) \in \mathcal{K}$. Recall equation (\ref{eq:Phi_operators}). It follows that
\begin{align*}
&\hat{\mathbf{y}}^\mathsf{T}\mathbf{M}\mathbf{y}+\hat{\mathbf{u}}^\mathsf{T}\mathbf{R}\hat{\mathbf{u}}\\
&=\hat{\mathbf{w}}^\mathsf{T}\left(\Phi_{yw}^\tr \mathbf{M} \Phi_{yw} + \Phi_{uw}^\tr \mathbf{R} \Phi_{uw} \right)\hat{\mathbf{w}}+\hat{\mathbf{v}}^\mathsf{T}\left(\Phi_{yv}^\tr \mathbf{M} \Phi_{yv} + \Phi_{uv}^\tr \mathbf{R} \Phi_{uv} \right)\hat{\mathbf{v}}\\
&\leq\frac{W^2}{\lambda_\mathbf{w}}\mathbb{E}_{\mathbf{w}}[\mathbf{w}^\mathsf{T}\left(\Phi_{yw}^\tr \mathbf{M} \Phi_{yw} + \Phi_{uw}^\tr \mathbf{R} \Phi_{uw} \right)\mathbf{w}]+\frac{V^2}{\lambda_\mathbf{v}}\mathbb{E}_{\mathbf{v}}[\mathbf{v}^\mathsf{T}\left(\Phi_{yv}^\tr \mathbf{M} \Phi_{yv} + \Phi_{uv}^\tr \mathbf{R} \Phi_{uv} \right)\mathbf{v}]\\
&\leq \max \left( \frac{W^2}{\lambda_\mathbf{w}},\frac{V^2}{\lambda_\mathbf{v}}\right)f(z+u)\,.
\end{align*}

Combining the above inequality with Lemma~\ref{le:10JK0} completes the proof.
\end{proof}

It is now straightforward to bound $G_\infty$ and $G_2$.
\begin{lemma}
\label{le:G}
Let $\hat{f}=\hat{\mathbf{y}}^\mathsf{T}\mathbf{M}\mathbf{y}+\hat{\mathbf{u}}^\mathsf{T}\mathbf{R}\hat{\mathbf{u}}$ be the cost associated with the observed trajectories resulting from applying the control policy $\hat{\mathbf{u}}=\text{\emph{vec}}^{-1}\left(P(z+u)\right)\hat{\mathbf{y}}$. For every $z \in \mathcal{G}_{10\delta^{-1}}$,  $r\leq \frac{10\delta^{-1}f(z_0)}{L_\delta}$ and $u$ such that $\norm{u}_2=r$ we have 
\begin{align*}
&G_\infty\leq \frac{20\delta^{-1}dD}{r}f(z_0)\,, \quad
G_2\leq \left(\frac{20\delta^{-1}dD}{r}f(z_0)\right)^2\,.
\end{align*}
\end{lemma}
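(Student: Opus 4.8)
The plan is to reduce both bounds directly to Lemma~\ref{le:bound_D}, which already furnishes a uniform bound $\hat{f}=\hat{\mathbf{y}}^\mathsf{T}\mathbf{M}\hat{\mathbf{y}}+\hat{\mathbf{u}}^\mathsf{T}\mathbf{R}\hat{\mathbf{u}}\leq 20\delta^{-1}Df(z_0)$ valid for every $z \in \mathcal{G}_{10\delta^{-1}}$, every $u$ with $\norm{u}_2 = r$, and \emph{every} realization of the bounded disturbances. In the notation of Theorem~\ref{theorem_generic}, the one-point oracle is $V(z+u,\xi)=\hat{f}$, so the task is purely to substitute this uniform bound into the definitions of $G_\infty$ and $G_2$.

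For $G_\infty$, I would first use $\norm{u}_2=r$ to compute $\norm{\frac{d}{r^2}\hat{f}u}_2 = \frac{d}{r^2}\hat{f}\norm{u}_2 = \frac{d}{r}\hat{f}$, and then invoke Lemma~\ref{le:bound_D} to replace $\hat{f}$ by its upper bound, yielding $G_\infty \leq \frac{d}{r}\cdot 20\delta^{-1}Df(z_0)=\frac{20\delta^{-1}dD}{r}f(z_0)$. Since the bound on $\hat{f}$ is uniform over the sublevel set and the smoothing ball, it passes directly through the supremum in the definition of $G_\infty$.

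For $G_2$, the key elementary fact is that a conditional variance is dominated by the corresponding second moment: writing $X=\hat{f}u$, we have $\mathbb{E}\left[\norm{X-\mathbb{E}[X\mid u]}_2^2\right]\leq \mathbb{E}\left[\norm{X}_2^2\right]$ by the tower property together with the standard inequality $\mathrm{Var}(\cdot\mid u)\preceq \mathbb{E}[(\cdot)(\cdot)^\mathsf{T}\mid u]$ (taking traces). Since $\norm{X}_2^2=\hat{f}^2\norm{u}_2^2=\hat{f}^2 r^2$ and $\hat{f}^2\leq (20\delta^{-1}Df(z_0))^2$ holds deterministically by Lemma~\ref{le:bound_D}, the expectation is bounded by $(20\delta^{-1}Df(z_0))^2 r^2$. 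Multiplying by the prefactor $\frac{d^2}{r^4}$ gives $G_2 \leq \frac{d^2}{r^4}(20\delta^{-1}Df(z_0))^2 r^2 = \left(\frac{20\delta^{-1}dD}{r}f(z_0)\right)^2$.

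The proof contains essentially no obstacle once Lemma~\ref{le:bound_D} is in hand; the only point requiring care is that the bound on $\hat{f}$ must be uniform over all disturbance realizations rather than merely in expectation, which is exactly what the boundedness of $\mathcal{D}_w,\mathcal{D}_{\delta_0},\mathcal{D}_v$ guarantees through Lemma~\ref{le:bound_D}. This uniformity is precisely what allows a single deterministic pointwise bound to pass through both the supremum defining $G_\infty$ and the expectation defining $G_2$, so the analytically substantive work has already been carried out in the preceding lemmas.
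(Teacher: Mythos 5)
Your proposal is correct and follows essentially the same route as the paper: the $G_\infty$ bound via $\norm{u}_2=r$ and the uniform deterministic bound on $\hat{f}$ from Lemma~\ref{le:bound_D}, and the $G_2$ bound by dominating the conditional variance by the second moment (the paper compresses this step to ``$G_2\leq G_\infty^2$ by definition,'' which you simply spell out more explicitly).
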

\begin{proof}
 We have for any disturbances realization that
\begin{align*}
G_\infty&=\sup_{z \in \mathcal{G}_{10\delta^{-1}}}\norm{\hat{f} u \frac{d}{r^2}}_2=\frac{d}{r}\sup_{z\in \mathcal{G}_{10\delta^{-1}}}\hat{f}\leq \frac{20\delta^{-1}dD}{r}f(z_0)\,,
\end{align*}
where the last inequality is an application of Lemma~\ref{le:bound_D}. Since it holds that $G_2\leq G_\infty^2$ by definition, we also obtain the bound for $G_2$.
\end{proof}

Finally, Theorem~\ref{th:convergence_analysis} is proven by simply substituting the bounds for $G_2$ and $G_\infty$ derived in Lemma~\ref{le:G} into Theorem~\ref{theorem_generic}.

\section{Details on Experiments}
\label{app:experiments}

For the control problem described above, it is easy to solve (\ref{eq:convex_Q}) with convex programming and verify that the optimal cost is given by $J(\mathbf{K}^\star)=f(z^\star)=J^\star=0.5918$, where 
\begin{equation*}
\mathbf{K}^\star=\begin{bmatrix}
2.7881&0&0&0&0&0&0&0&0\\
-0.2284&0&0&0.9833&0&0&0&0&0
\end{bmatrix}\,, \quad z^\star=\begin{bmatrix}
2.7881&-0.2284&0.9833
\end{bmatrix}^\mathsf{T}\,.
\end{equation*}
To test the model-free performance of Algorithm~\ref{algo} and validate the result of Theorem~1, we proceeded as follows. We first picked an initial control policy $z_0=z^\star-\begin{bmatrix}1&1&1\end{bmatrix}^\mathsf{T}$, which is such that $f(z_0)=0.8951$ and thus $\Delta_0=0.3033$. Selecting the stepsize is a notoriously delicate task, inherent to reinforcement learning approaches \cite[Chapter 6]{bertsekas2011dynamic}; for this reason, the values $\eta=0.0005$ and $r=0.1$ were selected by trial-and-error, until the satisfactory convergence behaviour of Figure~\ref{fig:b} was obtained. A rigorous validation for this choice is beyond the scope of the paper.  Then, we plotted 1) the  average number of steps over $10$ runs of Algorithm~\ref{algo} needed to achieve $7$ increasingly tight precision levels from $\epsilon = 0.2$ to $\epsilon = 0.02$ and 2) the sample-complexity $T$ predicted by Theorem~\ref{th:convergence_analysis} when $\eta$ is scaled as $\eta=\mathcal{O}\left( \epsilon r^2\right)$ and $r$ is scaled as $r=\mathcal{O}( \sqrt{\epsilon})$. We refer to Figure~\ref{fig:a} for the corresponding plots. We verified that for each precision level, by stopping the algorithm  exactly at the iterations $T$ shown in red in Figure~\ref{fig:a}, the corresponding $z_T$ was within the desired precision level $10/10$ of the runs.

}

\end{document}